\newtheorem{Theorem}{Theorem}
\theoremstyle{plain}
\newtheorem{Corollary}{Corollary}
\newtheorem{Definition}{Definition}
\newtheorem{Lemma}{Lemma}
\newtheorem{Remark}{Remark}
\numberwithin{equation}{section}
\begin{document}

\title[filtering]{On Observation and The Completion of Quantum Mechanics}
\author{M.~F.~Brown}
\date{October 2023. Typo corrections and revised conclusion, January 2024}

\begin{abstract}
We start with a discussion of the use of mathematics to model the real world then justify the role of Hilbert space formalism for such modelling in the general context of quantum logic. Following this, the incompleteness of the Schr\"odinger equation is discussed as well as the incompleteness of von Neumann's measurement approach \cite{vN}. Subsequently, it is shown that quantum mechanics is indeed completed by the addition of an observer, however the observer is not described in the Hamiltonian formalism but \emph{necessarily} by the quantum stochastic formalism discovered in \cite{HP}. Consequently, the complete theory of quantum mechanics appears to be the Quantum Filtering Theory \cite{ND,NLF}. Finally, it is shown how Schr\"odinger's cat may be understood as a quantum filter, providing an intuitively realistic model and an insight into how  quantum filtering works.
\end{abstract}


\maketitle
\tableofcontents


\section{Introduction}\label{1}
Hilbert space, or equivalently $C^\ast$-algebra, is the mathematical framework of quantum mechanics but do we need to justify its use in modelling the world, or parts thereof? After all, it is rather unusual for anyone to question, or justify, the use of vector calculus as the mathematical framework of electromagnetism. Of course vector calculus makes sense as far as our abilities to `imagine', `conceptualise' and `comprehend' are concerned. However, it should be exposed that this is an attempt to model a certain class of observations using a specific mathematical structure, it is not `reality'. Instead, it is a mathematical framework that helps us make sense of our observations. If we believe that a model \emph{is} reality then we run the risk of constraining our subsequent understanding of the world, becoming stuck in particular systems of mathematics and logic.
That said, we do need logical structures if we are to attempt to understand the world around us, because our understanding is determined by our ability to predict and thus engineer.

Vector calculus is a good logic for imagining a traditional 3-dimensional space of evolving trajectories, naturally giving rise to space-time. So good in fact, that most people would not question it as truth. But what about $C^\ast$-algebra? In quantum mechanics we are usually more interested in $W^\ast$-algebras (von Neumann algebras). A von Neumann algebra $\mathcal{A}$ is a special kind of $C^\ast$-algebras (it is equal to its bicommutant) and as such, may be represented by an operator algebra on a Hilbert space. The building blocks of such algebras are the projection operators, which are a fundamental ingredient of Quantum Mechanics, and von Neumann algebras can be classified according to the nature of their projectors. So, when it comes to describing the world using Hilbert space the important thing to understand is how the mathematical logic of projectors captures the nature of the world around us.

On the one hand, we can see geometry around us and watch things moving along trajectories - a vector calculus outlook. On the other hand we can take less obvious (more abstract) approaches to understanding what we see. Notice that one of the key ideas, that has been implicit up to now, is that when we are trying to describe {the} world it is really just our \emph{observation of} the world that we are describing. That is, the best we can do is use logic to understand our (own or collective) experience of the world. Physics attempts to resolve this by replacing ourselves, as observers, with an apparatus of some kind, an `objective observer'. This will achieve a certain level of objectivity in an otherwise subjective world, but this objectivity is somewhat determined by us. This is because the apparatus that we build is conceptually driven by the logic that we have chosen to apply to the world. For example, if we only understand the world in terms of vector calculus, then we would engineer an apparatus to measure some property related to this vector calculus model. It is usually when our apparatus produces outcomes that we don't understand using our current model, that we start wondering if perhaps our model has not as good a likeness to the universe as we thought. Alternatively, we might think our model looks like the world, but when applied to less accessible parts of the world (like applying classical orbiting electrons to  the concept of atoms) things happen that don't make sense. Consequently, additional logic needs to be introduced to resolve such issues (like Bohr's model of the atom).

There is no reason why the universe should be logical, but we can use logic to understand the universe. It is important to distinguish between these two things. `Understanding' is always subjective: we come to understand the universe (or part of) in the context of a model. Any given model will have a specific structure and in the context of this structure (this is the subjectivity) we can ask questions and get answers. However, we are always restricted to only those questions that are relevant to the model, so we can only make deductions that are \emph{compatible} with our observations. The universe is not a space-time, but if we think about it in this way then we can obtain some understanding. Similarly, the universe is not a Hilbert space either. One of the reasons why the line between models and reality has become so blurred is due to our ability to engineer as a result of our acquisition of understanding: if a model can make accurate predictions then we conclude that we know how something works. If we know how something works then we can control it - build machines. If we can control the world we believe that our understanding is true. Moreover, we can use these machines to acquire more understanding of the world. In modern physics it is often the machines themselves that we are trying to understand. For example, we use our knowledge of quantum mechanics to build a sensitive apparatus to test the fundamental principles of quantum mechanics.

The basic idea in quantum mechanics is as follows. We consider an approximately isolated part of the universe, the `object', that is meaningful in the context of some apparatus that has been designed in accordance with prior knowledge of how the universe works based on a specific model through which we have some kind of context. This system may be a bit of matter or a diffraction grating with `known' properties and additional properties that are yet to be determined. As far as an experiment is concerned the system can interact with the apparatus in some known way. Again, this is entirely based on prior knowledge about what the apparatus is and does, and how we can control it.

It should now be coming to light that there are two main components in a complete theory of physics. The first is a mathematical structure that encodes, and attempts to capture, our observations and re-imagine them. The second is a mathematical structure that encodes, and attempts to capture, the means by which our observations arise. The usual idea is that there \emph{is} an objective reality, of which we are part, and life is a collection of experiences (observations) within this. A more factual approach would say that we have experiences and we are trying to give some kind of structure to whatever part of the universe our experiences are arising from. That is: we are trying to establish \emph{the source} of our experience of life. It tends to be easier to believe that this source has an objective reality, but even it doesn't we can simply define objective reality to be the emergent structure that we infer in order for us to understand our observations. Notice that this latter statement does not require objective reality to pre-exist, but rather, it implies that we are generating it on our journey of understanding. This also begs the question: can we choose our objective reality?

\section{From Classical to Quantum Logic}\label{2}
Classical logic is usually described by Boolean algebra, a distributive lattice with logical compliment. It consists of a  set $C$ of elements (events, propositions, knowledge, \emph{et c.}) $a,b,\ldots$ equipped with a partial order $\leq$, where $a\leq b$ means that if $b$ is `true' then it follows that $a$ is `true'. There is also a
negation which reverses the order relation such that $a\leq b\Leftrightarrow b^|\leq a^|$, where $a^|$ is the logical compliment of $a$ and $a^{||}=a$.  Further, there are binary operations $\wedge$ and $\vee$. The first is a logical infimum of $a$ and $b$, such that $a\wedge b$ is the most knowledge common to both $a$ and $b$. That is $a\wedge b\leq a$ and $a\wedge b \leq b$,  and we can also interpret $a\wedge b$ as the truth common to both $a$ and $b$. The second operation is the logical  supremum of $a$ and $b$, such that $a\vee b=(a^|\wedge b^|)^|$ is the least knowledge from which both $a$ and $b$ follow, so $a\vee b \geq a$ and $a\vee b \geq b$.

The set $C$ contains a greatest element $1=a\vee a^|$ for any element $a\in C$, which is `everything' (maximal knowledge) such that $1\wedge a=a$. Similarly, there is a least element $0=a\wedge a^|=1^|$, `nothing', such that $0\vee a=a$.  These binary operations are themselves associative and commutative and classical logic also imposes \emph{distributivity}:
\begin{equation}\label{dist}
    (a\wedge b)\vee c=(a \vee c)\wedge (b\vee c)\quad\text{and}\quad a\wedge (b\vee c)=(a \wedge b)\vee (a\wedge c).
\end{equation}
Two elements $a,b$ are called
\begin{equation}
    \text{\emph{inconsistent} if }a\wedge b=0,\quad \text{\emph{disjoint} if }b\leq a^|, \quad\text{and}\quad \text{\emph{conjoint} if }b^|\leq a.
\end{equation}
Inconsistent means that there is no knowledge following from $a$ which also follows from $b$, disjoint means that $b$ will be true if $a$ is false and vice versa, and conjoint means that if $a$ is true then $b$ is false (and vice versa) which also means that $a\vee b=(b^|\vee a)\vee b=a\vee 1=1$.
Notice that if $a$ and $b$ are both disjoint and conjoint then $a=b^|$ without the requirement for distributivity. However, the distributivity has a  very important consequence:
\begin{equation}
    \text{\emph{disjoint} $\Leftrightarrow$ \emph{inconsistent}}
\end{equation} \emph{Disjoint} $\Rightarrow$ \emph{inconsistent} simply follows from the associativity of $\wedge$, since $b\leq a^|\Rightarrow b=b\wedge a^|$ we have $0=0\wedge b=(a\wedge a^|)\wedge b=a\wedge (a^|\wedge b)=a\wedge b$. However,  \emph{inconsistent} $\Rightarrow$ \emph{disjoint} requires
distributivity: $b=(a\vee a^|)\wedge b=(a \wedge b)\vee ( a^|\wedge b)$ so $a\wedge b=0\Rightarrow b=a^|\wedge b\Rightarrow b\leq a^|$.

Additional structure that follows from distributivity is \emph{logical relativity} \cite{MQT}. This corresponds to defining a negation $b^{a|c}$ of an element $b$ with respect to an interval $b\in[a,c]\subset C$ which ultimately corresponds to the conditioning of $b$ with respect to $[a,c]:=\{e\in C: a\leq e\leq c\}$. It is given by the associative operation
\begin{equation}
    b^{a|c}=(a\vee b^| )\wedge c=a\vee( b^| \wedge c)\equiv a\vee b^| \wedge c,
\end{equation}
and defines an involution as $(b^{a|c})^{a|c}=a\vee (b^{a|c})^| \wedge  c = a\vee (c^|\vee b \wedge  a^|)\wedge  c =b$.

The simplest example of a Boolean algebra is $C=\{0,1\}$ but this is not so useful as we only have `nothing' (no knowledge) or `everything' (maximal knowledge). What's more interesting is $C=\{0,e,e^|,1\}$, \emph{the classical bit}, where $e\wedge e^|=0$ and $e\vee e^|=1$. This is a very simple and intuitive way of handling observations: an observation is simply represented by an event $e$. But note that the compliment of this event, $e^|$, is itself another event.

This classical logic has a concrete realisation as an algebra of projectors in an abelian $W^*$-algebra $\mathcal{C}$ (with Hermitian involution $\dag$)  so that $a\in C$ is mapped to a projection $P_a=P_a^\dag P_a=P_a^\dag\in\mathcal{C}$. The binary operations are given by  $P_{a\wedge b}=P_a P_b$ and $P_{a\vee b}=P_a-P_a P_b+P_b$, with $P_0=0$ and $P_1=I$, so that $P_{a^|}=I-P_a$, and inconsistence (which is also disjunction) is written as $P_a P_b=0$, whilst conjunction is $(I-P_a)P_b=I-P_a$, or $P_a-P_a P_b+P_b=I$. Further, the partial ordering $P_a\leq P_b$ is determined by the non-negativity of $P_b-P_a$ (non-negative spectrum), and the relative compliment of $b\in[a,c]$ is given by
$P_{b^{a|c}}=(P_a-P_a(I-P_b)+(I-P_b))P_c=P_c -(P_b-P_a)$.

Our classical way of thinking about this mathematics often involves us picturing sets and thinking of $\wedge$ as set intersection and $\vee$ as set union, but this imagination is distributive in its nature. Physics experiments have revealed that this classical logic is too restrictive to correctly explain our observations, but more experiments can be explained if we imagine the universe in a non-distributive way.
Quantum logic is non-distributive and a very reasonable generalisation of classical logic. The idea is this: if two events are inconsistent why should they have to be disjoint? This would mean that we could have events $a$ and $b$ from which  no common knowledge follows, yet these events are not disjoint.  Such propositions are called \emph{incompatible}.
\begin{Remark}In fact,   for $a$ and $b$ to be incompatible we don't have to have $a\wedge b=0$ but rather: $a$ and $b$ are incompatible if $\exists$ non zero $p\leq a$ and $q\leq b$ such that $p\wedge q=0$ and $p\nleq q^|$.\end{Remark}

 Note that if  inconsistent events were disjoint then distributivity, and thus classical logic, would follow as a special case. The problem we now face is that if we try to imagine the events $a$ and $b$ `visually' as sets then we can't distinguish inconsistence from disjunction.  However, the algebra of projectors allows us to handle this conceptually reasonable quantum logic as follows.    If $A$ is a quantum logic then it can contain classical logics $C\subset A$. Any such $C$ has a representation as a set of commuting projectors in a subalgebra $\mathcal{C}$ of a $W^*$-algebra $\mathcal{A}$, as described above. Only this time we can say that two elements $a$ and $b$ are disjoint iff $P_a P_b=0$, and in order to understand inconsistence we must understand that $P_a\wedge P_b:=P_{a\wedge b}$ is the largest projector such that $P_a-P_{a\wedge b}$ and $P_b-P_{a\wedge b}$ are non-negative. Similarly, $P_a\vee P_b:=P_{a\vee b}$ is the smallest projector such that $P_{a\vee b}-P_a$ and $P_{a\vee b}-P_b$ are non-negative. Note that, if we denoted by $\mathcal{H}$ the representing Hilbert space of $\mathcal{A}$ then an operator $X$ is non-negative if $\langle\psi|X|\psi\rangle\geq 0$ for all $|\psi\rangle\in\mathcal{H}$.

 Now suppose that we have two incompatible elements  $a$ and $b$, such that $P_{a\wedge b}=0$ and $P_a P_b\neq 0$,
 then it follows that $P_a P_b\neq P_b P_a$. To see this, suppose these projectors commute, $[P_a,P_b]=0$, then $P_{a\wedge b}=P_a P_b$. Then inconsistence has the form $P_a P_b=0$, which also implies that $a$ and $b$ are disjoint. Therefore, if $a$ and $b$ are not disjoint then $[P_a,P_b]\neq 0$.
  So, if we let go of the distributivity of the binary operations of classical logic, then we end up with the quantum logic of events described by projectors in a non-commutative $W^*$-algebra $\mathcal{A}$. In order to generate the algebra $\mathcal{A}$ for a given model of some specific observations that we make we must first identify specific  observations $a$ and identify each with a projector $P_a$. Then we get $\mathcal{A}=\{P_a\}''$ by taking the bicommutant of the set of all such projectors.  So, the algebra $\mathcal{A}$ is the algebra generated by some generally incompatible observations.

  The representing Hilbert space $\mathcal{H}$ makes it easier to understand $P_{a\wedge b}$ and $P_{a\vee b}$. The former is the projection onto the Hilbert subspace that is the intersection of the Hilbert subspaces $P_a\mathcal{H}$ and $P_b\mathcal{H}$, such that $P_{a\wedge b}\mathcal{H}=P_{a}\mathcal{H}\cap P_{ b}\mathcal{H}=\{\psi\in\mathcal{H}:P_a\psi=\psi=P_b\psi\}$. The latter is the projection onto the closed linear span of these Hilbert subspaces such that $P_{a\vee b}\mathcal{H}=P_{a}\mathcal{H}\cup P_{ b}\mathcal{H}:=\overline{P_a\mathcal{H}+P_b\mathcal{H}_b}=\overline{\{\psi=\psi_a+\psi_b: \psi_k\in P_k\mathcal{H}; k=a,b\}}=(I-P_{a^|\wedge b^|})\mathcal{H}$.

 The main point about this quantum logic is that it consists of incompatible classical logics. Each classical logic corresponds to some kind of `conventional' way of describing the world, but the `quantumness'  allows different ways of describing the same world. Such different descriptions of the world are incompatible in the sense that there may be nothing common to either description of the world, yet these two seemingly different descriptions may \emph{completely} describe the same thing. To be clear about the latter: `different descriptions of the same thing' does not mean two disjoint propositions in a classical logic, it means there can be two non-disjoint propositions in two inconsistent classical logics. A well-known example in physics is the description of a particle. One could \emph{completely} describe this system in either the position representation or  the momentum representation; these are two incompatible descriptions of the same system. It was originally believed that position and momentum formed independent parts of a system,
 but on closer inspection this is seen to be false.

\section{The Completion of Quantum Mechanics}\label{3}

A standard assumption in orthodox quantum mechanics is that the evolution of a quantum system is determined by a unitary (strongly continuous one parameter)  group generated by the Hermitian  Hamiltonian operator. However, in this section it shall be shown that when measurement is included in the standard quantum theory, such Hamiltonian evolution is insufficient to describe the dynamics of a measurement apparatus coupled to a quantum system. Thus the Schr\"odinger equation is an incomplete description of quantum dynamics. Indeed, the complete description shall be derived and it shall be shown to correspond to Belavkin's Quantum Filtering Theory.   Intuitively, quantum filtering is a dynamical theory of information-extraction from a quantum system $\mathfrak{h}$. Moreover, it respects the unitary pure-state dynamics of wave-functions. Now the intention is to shown that this theory is not just {a} solution to the quantum measurement problem, but \emph{the} solution.

\subsection{von Neumann's Measurement Theory}
The first attempt to complete quantum mechanics was done in 1932 by von Neumann \cite{vN} and at first glance this appears to be quite reasonable. Following ideas that he also attributes to Szilard and Heisenberg, he divides the world into three parts, I, II and III, where I is the quantum system under observation, II is the measurement instrument and III is the actual observer. Different interpretations of II and III were considered, with the extreme case being that II is everything that connects the system I with the observer's abstract `ego' III. However, for practical purposes many of the concepts contained within II were cast into III, leaving only some basic type of apparatus to constitute II. For example, in the case of Schr\"odinger's cat that would mean taking I to be a two-level atom, II to be the cat (serving as an apparatus) and III to be the observer's ego and physical means by which they observe the state of the cat. All that said, von Neumann then excludes III from the calculations, with the justification that it makes no difference to predictions about I, and proceeds as follows.

The initial state of the composite system I + II may be given by a wave-function $\psi=|\psi\rangle\otimes |\xi_0\rangle$ in a separable Hilbert space $\mathcal{H}=\mathfrak{h}\otimes\mathfrak{k}$, where $\mathfrak{k}=L^2(\Omega)$ is the instrument Hilbert space constituting an apparatus and $\Omega$ is the set of measurement  outcomes $k$. Meanwhile,  $\mathfrak{h}$ is the Hilbert space of the quantum system under observation. A measurement first involves an interaction between the system and instrument given by a unitary operator $U(t)=e^{-\mathrm{i}Ht}$, with Hamiltonian $\hbar H$, coupling  (entangling) the instrument with the system over an interval of time $[0,t]$ to give $\psi(t)=U(t)\psi$. Following this interaction a measurement may be performed by the instrument resulting in an observation of type-$k$, with respect to some chosen orthonormal basis $\{|\xi_k\rangle\}$ for $\mathfrak{k}$. The entanglement of the system, which we'll call `object', with the apparatus may be written in the form
\begin{equation}\label{int}
    \psi(t)=U(t)(|\psi\rangle\otimes |\xi_0\rangle)
    =\sum V_{k}(t)|\psi\rangle\otimes|\xi_k\rangle,
\end{equation}
where $V_k(t)=(1_\mathfrak{h}\otimes\langle\xi_k|)e^{-\mathrm{i}Ht}(1_\mathfrak{h}\otimes|\xi_0\rangle)$ are contractions (norm reducing operators) in $\mathcal{B}(\mathfrak{h})$ with $V_k(t)|\psi\rangle:=c_k(t)|\psi_k(t)\rangle$,
where $\sum|c_k|^2=1$ and the $|\psi_k\rangle$ are normalised but not necessarily orthogonal.
The correct form of the Born rule would then state that the probability of making an observation of type-$k$ is given by $|c_k|^2=\psi(t)^\dag(1_\mathfrak{h}\otimes P_k)\psi(t)$, where $P_k=|\xi_k\rangle\langle\xi_k|$. This is the `correct' expression because the probabilities associated with an observation should be determined from the observation itself (which is given by the apparatus projector $P_k$) which in turn leads to inference about the quantum system.

In fact, von Neumann did not give the interaction in the form \eqref{int}, but instead in the form
\begin{equation}
    U(t)(\sum c_k|\phi_k\rangle\otimes|\xi_0\rangle)=\sum c_k|\phi_k\rangle\otimes|\xi_k\rangle
\end{equation}
corresponding to the realisation of the system in a state given by the vector $|\phi_k\rangle\in\mathfrak{h}$
 supposed to be a normalised eigenvector of whatever observable in $\mathcal{B}(\mathfrak{h})$ is being measured (i.e. the case where $V_k(t)$ is a projector $E_k$), but this did not respect the assumption that $U(t)=e^{-\mathrm{i}Ht}$.
So, in order to reconcile this interaction with the Schr\"odinger equation
von Neumann considered, instead, the case where $\mathfrak{h}$ and $\mathfrak{k}$ are both copies of $L^2(\mathbb{R})$ and then considers an  interaction of the form $U(t)=e^{\mathrm{i} (X\otimes D) t}$, neglecting kinetic energy for simplicity, where $X$ is the position operator of the system under observation and $\hbar D$ is the instrument momentum operator.
Now \eqref{int} assumes the form
\begin{equation}\label{int2}
    U(t)(|\psi\rangle\otimes |\xi_0\rangle)=\iint \psi(x)\xi_0(y+ t x)|x\rangle\otimes |y\rangle \mathrm{d}x\mathrm{d}y
\end{equation}
with respect to Dirac's generalised eigenfucntions $|x\rangle\notin\mathfrak{h}$, where $X|x\rangle =x|x\rangle$ and $\langle x|x'\rangle=\delta(x-x')$.

The interaction \eqref{int} satisfies the Schr\"odinger equation but there are two problems hidden in this theory. The first is that the dynamics of the measurement itself, regarded as the spontaneous action of a projector $P$, is not treated formally, and the second is that when we try to condition future expectations of an observable on prior observations the Law of Total Probability is not considered.  Now these issues shall be addressed in detail, and the consequences thereof.

\subsection{Non-commutative Conditional Expectations and Observation}

 The quantum filtering theory, amongst other things, is also a theory of quantum causality and we shall see how its structure may be derived from some logical principles underlying observation. Ultimately we must understand how to condition future observations on past observations, and then understand how to construct a time-continuous process of this kind. So quantum conditional expectations and stochastic calculus will play a fundamental role here. A very vague history of these things is as follows, which will doubtless do a lot of injustice to the many minds that have been involved.

 Stochastic calculus did not emerge until It\^o's inventions in the 1940's and, according to Umegaki \cite{UM}, there was no proper theory of quantum probability, motivating him to construct a non-commutative conditional expectation in the 1950's. But since these inventions both quantum probability and stochastic calculus have been developed extensively; see citations in \cite{CHAO}. Further, a solution to the classical filtering problem emerged in the 1960's from Stratonovivch \cite{ST}, going into the 1970's quantum stochastic calculus developed from the study of quantum dynamical semigroups,  see \cite{D} for example, resulting in the famous Lindblad equation,
 and in the 1980's this developed into Hudson's and Parthasarathy's rigorous Fock space formalism of quantum stochastic differential equations \cite{HP}. Meanwhile, Stratonovich's student Belavkin, who was also a pioneer of quantum information and quantum stochastic calculus,  was developing quantum filtering. However, others were also working on the theory of time-continuous quantum measurement, see citations in \cite{CTI}.

\begin{Definition} \cite{KAD}\label{def1}
    A non-commutative conditional expectation is a positive, unital, linear map between von Neumann algebras $\epsilon:\mathcal{A}\rightarrow\mathcal{C}$, where $\mathcal{C}\subset\mathcal{A}$ and $\epsilon(C_1 A C_2)=C_1\epsilon(A)C_2$ for all $C_1,C_2\in\mathcal{C}$ and $A\in\mathcal{A}$. 
\end{Definition}
We shall begin with some basic examples of conditional expectations on the von Neumann algebra $\mathcal{A}=\mathcal{B}(\mathcal{H})$, for a separable Hilbert space $\mathcal{H}$, and consider a projection $P\in\mathcal{A}$.
Now define $\mathcal{C}\subset \mathcal{A}$ as the von Neumann algebra generated by $P$ and its orthogonal compliment $Q=I-P$ such that $\mathcal{C}=\{c_1 P+c_2 Q:c_1,c_2\in\mathbb{C}\}$, and note that it is abelian (commutative). First, we shall define the conditional expectation $\omega:\mathcal{A}\rightarrow \mathcal{C}'$, where $\mathcal{C}'=\{A\in\mathcal{A}:[A,C]=0\;\forall\;C\in\mathcal{C}\}=\{PAP+QAQ:A\in\mathcal{A}\}$ is the commutant of $\mathcal{C}$ in $\mathcal{B}(\mathcal{H})$ and generally not abelian. This conditional expectation is defined as
\begin{equation}\label{CEC}
    \omega(A)=PAP+QAQ
\end{equation}
and needn't be referred in any way  to a state $\mathbb{E}$. It is well-defined as a non-commutative conditional expectation according to Definition \ref{def1}. However, if we now define a conditional expectation $\epsilon:\mathcal{A}\rightarrow\mathcal{C}$ then we must introduce a state $\mathbb{E}$ on $\mathcal{A}$ as a means to reduce operators to scalars. Then we can define
\begin{equation}\label{CE}
    \epsilon(A)=\frac{\mathbb{E}[PAP]}{\mathbb{E}[P]}P+\frac{\mathbb{E}[QAQ]}{\mathbb{E}[Q]}Q.
\end{equation}
We shall soon see that if $\mathbb{E}$ is a normal state, given by a trace-class density operator so that $\mathbb{E}[A]=\texttt{Tr}[\varrho A]$, then $\varrho$ must generally assume the form $P\psi\psi^\dag P+Q\psi\psi^\dag Q\in\mathcal{C}'$.

Let's now apply this to von Neumann's measurement theory,  recalling the object-apparatus interaction given by \eqref{int2}, where $\mathcal{H}=\mathfrak{h}\otimes\mathfrak{k}$ with $\mathfrak{h}=\mathfrak{k}=L^2(\mathbb{R})$, and see what happens when we take a measurement at a time $s$, say. The measurement is represented by a projector $P$ and  the von Neumann-L\"uders projection postulate, written in its proper form, states that
\begin{equation}\label{pp}
   \psi(s)\mapsto \frac{(1_\mathfrak{h}\otimes P)\psi(s)}{\|(1_\mathfrak{h}\otimes P)\psi(s)\|}
\end{equation}
and consequently this new wave-function continues to evolve according to the interaction Hamiltonian from $s$ up to some later time $t>s$ if the apparatus and object are assumed to remain coupled, or re-couple, after the measurement. As a simple example we could suppose the projector $P$ in \eqref{CE} represents an observation that a quantum particle is in a region $\Delta\subset \mathbb{R}$, so that $Q$ represents the observation that the particle is not in $\Delta$. Then, with $\mathbb{E}_s[A]:=\psi(s)^\dag A\psi(s)$ and $A=X\otimes 1_\mathfrak{k}$, the conditional  expectation \eqref{CE} would assume the form
\begin{equation}\label{ob11}
\epsilon_s(X)=\frac{\mathbb{E}_s[X\otimes P]}{\mathbb{E}_s[1_\mathfrak{h}\otimes P]}P+\frac{\mathbb{E}_s[X\otimes Q]}{\mathbb{E}_s[1_\mathfrak{h}\otimes Q]}Q.
\end{equation}
It is important to understand that we are imposing the condition that the particle will be in \emph{either} $\Delta$ \emph{or} $\mathbb{R}\setminus\Delta$ when observed at time $s$, and not in a superposition of these positions. Such is a consequence of the experimental setup that is designed to do just this. In particular, we are aiming to make predictions about some `quantum property' $X$ at time $t$ based on the outcome of an observation at an earlier time $s$, but we must first understand these conditional expectations in more detail.
\begin{Remark}
    In the study of von Neumann algebras Hermitian operators are called \emph{symmetric} and such an $X$, defined on a dense domain $\mathfrak{d}_X\subset\mathfrak{h}$, has $\mathfrak{d}_X\subseteq\mathfrak{d}_{X^\dag}$ and satisfies $X^\dag|\psi\rangle=X|\psi\rangle$ for all $|\psi\rangle\in\mathfrak{d}_X$. If a symmetric operator $X$ is unbounded but $(X+\mathrm{i}I)^{-1}\in\mathcal{B}(\mathfrak{h})$ then although $X\notin\mathcal{B}(\mathfrak{h})$ it may still be said that $X$ is \emph{affiliated} to $\mathcal{B}(\mathfrak{h})$ and for most practical purposes $X$ may be handled in the same way as a bounded operator \cite{LUC}. Here we shall denote this affiliation by $X\neg\mathcal{B}(\mathfrak{h})$ and we shall also include within such terminology any self-adjoint $X\in\mathcal{B}(\mathfrak{h})$.

\end{Remark}
\begin{Definition}
    A quantum property is an essentially  self-adjoint operator $X$ affiliated to a von Neumann algebra. That means $X$ is a symmetric operator where the ranges of the operators $X\pm\mathrm{i}I$ are dense in $\mathfrak{h}$ \cite{RS}.
\end{Definition}

We can see that measurements and their outcomes are intimately related to  conditional expectations of quantum properties. However, the concept of such a conditional expectation is not yet complete, for we must insist that quantum properties be \emph{predictable}. This means that  the \emph{Law of Total Probability}   \eqref{LTP} must be satisfied, and that is the insistence  that the expectation of a quantum property is not affected by the conditioning.
\begin{Definition}
A non-commutative conditional expectation $\epsilon:\mathcal{A}\rightarrow\mathcal{C}$, as given in Definition \ref{def1}, defines  the {conditional expectation with respect to a normal state $\mathbb{E}$},  of an operator $A\neg\mathcal{A}$ with $\mathbb{E}[A]<\infty$,  if
    \begin{equation}\label{LTP}
       \mathbb{E}[ \epsilon(A)]=\mathbb{E}[A].
    \end{equation}
In which case we can write $\epsilon(A)=\mathbb{E}[A|\mathcal{C}]$. The identity \eqref{LTP} is called the Law of Total Probability (LTP) and if satisfied by $A$ we say that $A$ is predictable with respect to $\mathbb{E}|\mathcal{C}$, or $\mathbb{E}|\mathcal{C}$-predictable. If \eqref{LTP} holds for all $A\in\mathcal{A}$ we say that $\mathcal{A}$ is $\mathbb{E}|\mathcal{C}$-predictable.
\end{Definition}
Notice that if LTP is to be satisfied by \eqref{CE} then we must have
\begin{equation}\label{ltpeg}
    \mathbb{E}[PAP+QAQ]=\mathbb{E}[A],
\end{equation}
and the same can be said for \eqref{CEC} if we were to introduce a state.
Moreover, \eqref{ltpeg} leads us to one of two conditions. The first is that $[P,A]=0$ and the second is that $[P,\varrho]=0$. We shall soon see that these amount to the same thing.

\begin{Definition}\label{obom}
Let $\mathcal{A}$ be a von Neumann algebra and consider a quantum property $X\neg \mathcal{B}(\mathfrak{h})$, then we shall say that $X$ is $\Omega$-observable, observable with respect to a set of measurement outcomes $\Omega$, if $X$ is predictable with respect to $\mathbb{E}|\mathcal{C}$ where $\mathcal{C}=\{P_k:k\in\Omega\}''\subset\mathcal{B}(\mathfrak{h})'$ is the abelian algebra generated by a complete set of orthogonal projections $P_k\in\mathcal{A}$, $\sum P_k=I$, which represent observable events (measurement outcomes) $k\in\Omega$, such that $\mathcal{C}$ is the representation of a classical logic of measurement outcomes.
\end{Definition}
The important point here is that  $\mathcal{C}\subset\mathcal{B}(\mathfrak{h})'$, then since $\mathcal{B}(\mathfrak{h})$ is a type-$I$ factor it follows that $\mathcal{A}=\mathcal{B}(\mathfrak{h})\otimes \mathcal{C}$. Moreover, the conditional expectation $\epsilon(X)=\mathbb{E}[X|\mathcal{C}]$ is given by
\begin{equation}
    \epsilon(X\otimes1_\mathfrak{k})=\sum \frac{ \mathbb{E}[X\otimes P_k]}{\mathbb{E}[1_\mathfrak{h}\otimes P_k]}P_k,
\end{equation}
where $\mathcal{C}\subset\mathcal{B}(\mathfrak{k})$ and the sum is taken over all $k\in\Omega$ for which $\mathbb{E}[1_\mathfrak{h}\otimes P_k]\neq0$, and it should be clear that LTP is satisfied. We shall often denote $\epsilon(X\otimes1_\mathfrak{k})$ simply as $\epsilon(X)$ for brevity.
This may seem like a big assumption to make about the mathematical structure of a quantum system under observation, but we'll now see this follows from the LTP for any $X\neg\mathcal{B}(\mathfrak{h})$ conditioned on projectors $E_k\in\mathcal{B}(\mathfrak{h})$ which are complete in the sense that $\sum E_k=1_\mathfrak{h}$.
\begin{Lemma}\label{lem1}
    Suppose that $X\neg\mathcal{B}(\mathfrak{h})$ is predictable with respect to a normal state $\mathbb{G}$ and a conditional expectation $\omega:\mathcal{B}(\mathfrak{h})\rightarrow \{E_k:k\in\Omega\}''\subset\mathcal{B}(\mathfrak{h})$, where $\{E_k\}$ is a complete set of orthoprojectors in $\mathcal{B}(\mathfrak{h})$. Then there is an abelian algebra $\mathcal{C}=\{P_k:k\in\Omega\}''$ represented on a Hilbert space $\mathfrak{k}$, where $\{P_k\}$ is a complete set of orthoprojectors in $\mathcal{B}(\mathfrak{k})$, such that $\mathbb{G}$ admits a purification $\mathbb{E}$ on the von Neumann algebra $\mathcal{C}'=\mathcal{B}(\mathfrak{h})\otimes\mathcal{C}:=\mathcal{A}$ and there is a conditional expectation $\epsilon:\mathcal{A}\rightarrow\mathcal{C}$ called the canonical representation of $\omega$. That is, if $X$ is $\mathbb{G}|\{E_k\}''$-predictable in $\mathcal{B}(\mathfrak{h})$ then it is $\Omega$-observable (i.e. $\mathbb{E}|\{P_k\}''$-predictable in $\mathcal{B}(\mathfrak{h})\otimes\mathcal{C}$).
\end{Lemma}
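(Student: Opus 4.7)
The plan is to construct the apparatus space, its classical-logic algebra, a purifying vector, and the conditional expectation directly from the data $(\mathbb{G},\{E_{k}\},\omega)$, and then to observe that the LTP assumed for $X$ under $\omega$ is already exactly what is needed for $\Omega$-observability under $\epsilon$. Writing $\mathbb{G}[\,\cdot\,]=\langle\psi|\,\cdot\,|\psi\rangle$ for a vector representative $|\psi\rangle$ (after a standard GNS enlargement of $\mathfrak{h}$ if $\mathbb{G}$ is mixed), the form of $\omega$ forced by Definition \ref{def1} and mirrored in \eqref{CE} must be
\begin{equation*}
\omega(A)=\sum_{k}\frac{\mathbb{G}[E_{k}AE_{k}]}{\mathbb{G}[E_{k}]}\,E_{k},
\end{equation*}
so the hypothesis $\mathbb{G}[\omega(X)]=\mathbb{G}[X]$ unpacks to the single identity $\sum_{k}\mathbb{G}[E_{k}XE_{k}]=\mathbb{G}[X]$, and this is the only place the predictability of $X$ will enter the argument.

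For the apparatus side I would take $\mathfrak{k}:=\ell^{2}(\Omega)$ with orthonormal basis $\{|\xi_{k}\rangle\}_{k\in\Omega}$ and rank-one projectors $P_{k}:=|\xi_{k}\rangle\langle\xi_{k}|$, so that $\mathcal{C}:=\{P_{k}\}''$ is maximal abelian in $\mathcal{B}(\mathfrak{k})$; the type-$I$ factor remark immediately preceding the lemma then gives $\mathcal{C}'=\mathcal{B}(\mathfrak{h})\otimes\mathcal{C}=:\mathcal{A}$ on $\mathfrak{h}\otimes\mathfrak{k}$. The purifying vector I would define by distributing $|\psi\rangle$ over the apparatus basis in the pattern prescribed by $\{E_{k}\}$,
\begin{equation*}
|\Psi\rangle:=\sum_{k\in\Omega}E_{k}|\psi\rangle\otimes|\xi_{k}\rangle,\qquad\mathbb{E}[\,\cdot\,]:=\langle\Psi|\,\cdot\,|\Psi\rangle.
\end{equation*}
Completeness of $\{E_{k}\}$ gives $\||\Psi\rangle\|^{2}=\sum_{k}\langle\psi|E_{k}|\psi\rangle=1$, and orthogonality of $\{|\xi_{k}\rangle\}$ immediately yields $\mathbb{E}[A\otimes P_{k}]=\mathbb{G}[E_{k}AE_{k}]$ for every $A\in\mathcal{B}(\mathfrak{h})$; summing over $k$ and invoking the hypothesis at $A=X$ gives $\mathbb{E}[X\otimes 1_{\mathfrak{k}}]=\mathbb{G}[X]$, so $\mathbb{E}$ genuinely purifies $\mathbb{G}$ at the predictable element.

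Following the pattern of \eqref{CE}, I would set $\epsilon(A\otimes 1_{\mathfrak{k}}):=\sum_{k}(\mathbb{E}[A\otimes P_{k}]/\mathbb{E}[1_{\mathfrak{h}}\otimes P_{k}])\,P_{k}$, with the sum restricted to those $k$ with $\mathbb{G}[E_{k}]\neq 0$; this is manifestly a normal conditional expectation $\mathcal{A}\to\mathcal{C}$ in the sense of Definition \ref{def1}, and on $X\otimes 1_{\mathfrak{k}}$ it coincides with the canonical image of $\omega(X)$ under $E_{k}\leftrightarrow P_{k}$. The LTP check is then automatic,
\begin{equation*}
\mathbb{E}[\epsilon(X\otimes 1_{\mathfrak{k}})]=\sum_{k}\mathbb{E}[X\otimes P_{k}]=\sum_{k}\mathbb{G}[E_{k}XE_{k}]=\mathbb{G}[X]=\mathbb{E}[X\otimes 1_{\mathfrak{k}}],
\end{equation*}
which is precisely $\Omega$-observability. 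The real subtlety I expect is not this final verification but the second step: keeping $\mathcal{C}$ maximal abelian in $\mathcal{B}(\mathfrak{k})$ (so that $\mathcal{C}'=\mathcal{B}(\mathfrak{h})\otimes\mathcal{C}$ holds literally) while still allowing $|\Psi\rangle$ to represent a possibly mixed $\mathbb{G}$ is only possible after the GNS enlargement of $\mathfrak{h}$, and one has to check that this enlargement preserves the $\{E_{k}\}$-block structure before inserting it into the formula for $|\Psi\rangle$; the rest of the argument is linear algebra with the projectors.
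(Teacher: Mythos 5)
Your construction of the purifying vector is where the argument breaks. You set $|\Psi\rangle=\sum_{k}E_{k}|\psi\rangle\otimes|\xi_{k}\rangle$, which gives $\mathbb{E}[A\otimes 1_{\mathfrak{k}}]=\mathbb{G}\big[\sum_{k}E_{k}AE_{k}\big]$ for a general $A\in\mathcal{B}(\mathfrak{h})$; the $\mathfrak{h}$-marginal of your state is therefore the decohered density $\sum_{k}E_{k}\varrho E_{k}$, not $\varrho$. Checking $\mathbb{E}[X\otimes 1_{\mathfrak{k}}]=\mathbb{G}[X]$ at the single predictable element $X$ (which is all the LTP hypothesis gives you) does not make $\mathbb{E}$ a purification of $\mathbb{G}$, and the lemma explicitly asserts that one exists. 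The paper closes exactly this hole before constructing anything: it observes that when $[X,E_{k}]\neq 0$ the LTP identity $\texttt{Tr}[\varrho X]=\texttt{Tr}[\sum_{k}E_{k}\varrho E_{k}X]$ forces $\sum_{k}E_{k}\varrho E_{k}=\varrho$, i.e.\ $\varrho$ is block-diagonal with respect to $\{E_{k}\}$. It then purifies $\varrho$ itself via its spectral decomposition refined by the blocks, $\widetilde{\psi}=\sum_{i\in\texttt{I}}c_{i}|\phi_{i}\rangle\langle\bar\xi_{i}|$ with $\texttt{I}=\sqcup\texttt{I}_{k}$, so that the intertwining relation $E_{k}\widetilde{\psi}=\widetilde{\psi}P_{k}^{\intercal}$, equivalently $(E_{k}\otimes 1_{\mathfrak{k}})\psi=(1_{\mathfrak{h}}\otimes P_{k})\psi$, holds as an operator identity and $\mathbb{G}[E_{k}XE_{k}]=\mathbb{E}[X\otimes P_{k}]$ follows together with $\texttt{Tr}_{\mathfrak{k}}[\psi\psi^{\dag}]=\varrho$. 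That block-diagonality is the one idea your proposal is missing; with it in hand, your vector coincides (up to relabelling) with the paper's purification, and without it your $\mathbb{E}$ is simply not the state the lemma promises.

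Two secondary points. First, the "GNS enlargement of $\mathfrak{h}$" you defer is not a technicality you can postpone: if you enlarge $\mathfrak{h}$ to represent a mixed $\mathbb{G}$ as a vector state before tensoring on $\ell^{2}(\Omega)$, the commutant of $\mathcal{C}$ becomes $\mathcal{B}(\mathfrak{h}\otimes\mathfrak{h}')\otimes\mathcal{C}$ and the asserted identification $\mathcal{C}'=\mathcal{B}(\mathfrak{h})\otimes\mathcal{C}$ no longer reads as stated. The paper avoids this entirely by letting one auxiliary space $\mathfrak{k}=L^{2}(\texttt{I})$ do double duty — it simultaneously purifies $\varrho$ and carries the projectors $P_{k}=\sum_{\texttt{I}_{k}}|\xi_{i}\rangle\langle\xi_{i}|$ — which is also why $\dim\mathfrak{k}$ is generally larger than $|\Omega|$ and need not equal $\dim\mathfrak{h}$, whereas your $P_{k}$ are forced to be rank one. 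Second, your construction is essentially the paper's own treatment of the special case $[E_{k},X]=0$, where it too passes from $|\psi\rangle\langle\psi|$ to the mixture $\sum_{k}E_{k}|\psi\rangle\langle\psi|E_{k}$ and purifies that to $\sum_{k}c_{k}|\phi_{k}\rangle\otimes|\xi_{k}\rangle$; so what you have written is a correct proof of the commuting special case promoted, without justification, to the general statement.
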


\begin{proof}
If $X\neg\mathcal{B}(\mathfrak{h})$ is $\mathbb{G}|\{E_k:k\in\Omega\}''$-predictable, with projectors $E_k\in\mathcal{B}(\mathfrak{h})$, $\sum E_k=1_\mathfrak{h}$, then there is a conditional expectation
\begin{equation}\label{omce}
    \omega(X)=\sum \frac{ \mathbb{G}[E_kXE_k]}{\mathbb{G}[E_k]}E_k,
\end{equation}
where sum is taken over non-zero $\mathbb{G}[E_k]$. Since $\mathbb{G}$ is assumed to be a normal state it is given by a density operator $\varrho$ and LTP \eqref{LTP} requires that $\texttt{Tr}[\varrho X]=\texttt{Tr}[\sum\varrho E_kXE_k]$. If $[X,E_k]\neq0$ then we must have $\sum E_k\varrho E_k=\varrho$ and this means that there is a Hilbert space $\mathfrak{k}$ and projectors $P_k\in\mathcal{B}(\mathfrak{k})$ such that the density may be decomposed as $\varrho=\widetilde{\psi}\widetilde{\psi}^\dag$ where $\widetilde{\psi}:\mathfrak{k}\rightarrow\mathfrak{h}$ and $E_k\widetilde{\psi}=\widetilde{\psi}P_k^\intercal$, where $P^\intercal$ is operator transpose which may be induced by the anti-linear complex conjugation operator $C$ as $P^\intercal=C P^\dag C$.
To see this, first write $\varrho=\sum_k \sum_{i_k}|c_{i_k}|^2|\phi_{i_k}\rangle\langle\phi_{i_k}|$, where $k\in\Omega$, $i_k\in \texttt{I}_k$, $|\texttt{I}_k|\geq1$
if $\mathbb{G}[E_k]\neq 0$,  and $E_l|\phi_{i_k}\rangle=\delta_{lk} |\phi_{i_k}\rangle$, such that $|\phi_{i_k}\rangle$ are the (normalised) degenerate eigenvectors of the projectors $E_k$. Then define $\mathfrak{k}=L^2(\texttt{I})$ with orthonormal basis $\{|\xi_{i}\rangle\}$, where $\texttt{I}=\sqcup\texttt{I}_k$ and $\sqcup$ denotes disjoint union, and define $\widetilde{\psi}=\sum _\texttt{I}c_{i}|\phi_i\rangle\langle \bar\xi_i|$, where $\langle \bar\xi_i|=\langle\xi_i|C$. Then $P_k=\sum_{\texttt{I}_k}|\xi_i\rangle\langle\xi_i|$ satisfies $E_k\widetilde\psi=\widetilde \psi P_k^\intercal$ for all $k\in\Omega$. If $\mathbb{G}[E_k]=0$ for a $k$ then $E_k|\phi_i\rangle=0$ for any $i\in\texttt{I}$. Then one may simply add an extra dimension to $\mathfrak{k}$ and define a corresponding projector $P_k=0\oplus1\in\mathfrak{k}\oplus \mathbb{C}$.

The operator $\widetilde\psi$ is the partial transpose of an entangled vector $\psi\in\mathcal{H}=\mathfrak{h}\otimes\mathfrak{k}$, and this partial transposition
is defined on product vectors as
\begin{equation}
\widetilde{|\phi\rangle\otimes|\xi\rangle}=|\phi\rangle\otimes |\xi\rangle^\intercal=|\phi\rangle\langle\bar\xi|,
\end{equation}
where $\langle\bar\xi|=\langle\xi|C$. Indeed, $\psi$ simply corresponds to the purification of $\varrho$ and defines the pure-state $\mathbb{E}$. Now on $\mathcal{H}$ we see that
\begin{equation}
(E_k\otimes 1_\mathfrak{k})\psi=\widetilde{E_k\widetilde\psi}= \widetilde{\widetilde\psi P_k^\intercal}=(1_\mathfrak{h}\otimes P_k)\psi,
\end{equation}
so we see that $\mathbb{G}[E_kXE_k]=\mathbb{E}[X\otimes P_k]$ so \eqref{omce} can be represented in $\mathcal{C}=\{P_k:k\in\Omega\}''$ in the form
\begin{equation}\label{epce}
    \epsilon(X\otimes 1_\mathfrak{k})=\sum \frac{ \mathbb{E}[X\otimes P_k]}{\mathbb{E}[1_\mathfrak{h}\otimes P_k]}P_k,
\end{equation}
where $\mathcal{C}$ is isomorphic to $\{E_k:k\in\Omega\}''$ due to the one-to-one correspondence $E_k\leftrightarrow P_k$ between the projectors $E_k\in\mathcal{B}(\mathfrak{h})$ and $P_k\in\mathcal{B}(\mathfrak{k})$. Note that $\mathfrak{k}$ and $\mathfrak{h}$ need not have the same dimension. 

In the special case where $[E_k,X]=0$ we can take $\varrho$ to be pure, given by $|\psi\rangle$ but we can trivially obtain the same dilated structure nonetheless. This is exactly the same procedure as the $[E_k,X]\neq0$ case, but this time we must transform the density $|\psi\rangle\langle\psi|$ into the mixture $\sum E_k|\psi\rangle\langle\psi|E_k$ first. This is then purified to $\psi=\sum c_k|\phi_k\rangle\otimes|\xi_k\rangle$ in the Hilbert space $\mathfrak{h}\otimes L^2(\Omega)$, where $E_k|\psi\rangle=c_k|\phi_k\rangle$.
\end{proof}

Now we see that if a quantum property $X$ is observable with respect to some measurement outcomes $\Omega$ then it is actually the measurement outcomes, given by projectors $P_k$, which are the `true' observables. That is, the projectors $P_k$ are the mathematical representation of the actual observations that can be made, and the quantum property is being observed \emph{indirectly} by virtue of its coupling to the measuring instrument. This is similar to von Neumann's measurement theory, only here the tensor structure has been derived from a requirement to satisfy LTP.

Notice that in Lemma \ref{lem1} we have derived the canonical structure of observation of a quantum system. However, this was derived from an attempt to take projections in $\mathcal{B}(\mathfrak{h})$. If, instead, we start with the tensor product structure $\mathfrak{h}\otimes\mathfrak{k}$ then we still have conditional expectations of the form \eqref{epce}, but now $(1_\mathfrak{h}\otimes P_k)\psi=V_k|\psi\rangle\otimes |\xi_k\rangle$, as mentioned below \eqref{int}, where the operators $V_k$ needn't be projectors $E_k$; the only requirement is that $\sum V_k^\dag V_k^{}=1_\mathfrak{h}$. This means that $\mathbb{E}[X\otimes P_k]=\mathbb{G}[V_k^\dag X V_k^{}]$ where $\mathbb{E}$ is the pure-state given by the vector $\psi=U(|\psi\rangle\otimes|\xi\rangle)\in\mathfrak{h}\otimes\mathfrak{k}$ and $\mathbb{G}$ is a pure-state given by $|\psi\rangle$; note that $V_k=(1_\mathfrak{h}\otimes \langle \xi_k|)U(1_\mathfrak{h}\otimes |\xi\rangle)$. Since the vectors $|\psi_k\rangle=V_k|\psi\rangle$ are not generally orthogonal we cannot identify the projectors $P_k$ with any corresponding projectors $E_k\in\mathcal{B}(\mathfrak{h})$.
This is why the projection postulate must be written in the form \eqref{pp}. Further, if we start with only $\mathcal{B}(\mathfrak{h})$, then we cannot consider such cases when $|\Omega|>\dim\mathfrak{h}$. So now we'll try to derive the object-apparatus $\otimes$-structure for an $\mathbb{E}|\mathcal{C}$-predictable quantum system $\mathcal{B}(\mathfrak{h})$ without first attempting to condition with projectors $E_k\in\mathcal{B}(\mathfrak{h})$.

\begin{Theorem}\label{th1}
    Suppose that a quantum system is represented by $\mathcal{B}(\mathfrak{h})$ and is predictable with respect to an abelian von Neumann algebra $\mathcal{C}$ and a state $\mathbb{E}$ on $\mathcal{A}:=\mathcal{B}(\mathfrak{h})\vee\mathcal{C}$, which is the smallest von Neumann algebra containing both $\mathcal{B}(\mathfrak{h})$ and $\mathcal{C}$. Then $\mathcal{A}=\mathcal{B}(\mathfrak{h})\otimes\mathcal{C}$.
\end{Theorem}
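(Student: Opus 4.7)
The plan is to reduce the theorem to two assertions: (i) the predictability hypothesis forces $\mathcal{C}$ to sit inside the commutant of $\mathcal{B}(\mathfrak{h})$; (ii) once this commutation is established, the conclusion is a standard fact about type $I$ factors. To prepare the ground, I would represent $\mathcal{A}$ on the GNS Hilbert space $\mathcal{H}$ of $(\mathcal{A},\mathbb{E})$ with cyclic vector $\Omega$, and invoke the multiplicity structure for the type $I$ factor $\mathcal{B}(\mathfrak{h})\subset\mathcal{B}(\mathcal{H})$: there is a unitary $\mathcal{H}\simeq\mathfrak{h}\otimes\mathfrak{k}$ under which $\mathcal{B}(\mathfrak{h})$ acts as $\mathcal{B}(\mathfrak{h})\otimes 1_\mathfrak{k}$, with commutant $1_\mathfrak{h}\otimes\mathcal{B}(\mathfrak{k})$.

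The core work is to show $\mathcal{C}\subset\mathcal{B}(\mathfrak{h})'$. Fix a projection $P\in\mathcal{C}$ and write $Q=I-P\in\mathcal{C}$. From the bimodule property of $\epsilon:\mathcal{A}\to\mathcal{C}$ together with the commutativity of $\mathcal{C}$, one computes $\epsilon(PXQ)=P\epsilon(X)Q=\epsilon(X)PQ=0$ for any $X\in\mathcal{A}$. Applying this to $X=A\in\mathcal{B}(\mathfrak{h})$, I would then promote LTP from $\mathcal{B}(\mathfrak{h})$ to the mixed $*$-algebra generated by $\mathcal{B}(\mathfrak{h})\cup\mathcal{C}$ by noting that $\mathbb{E}\circ\epsilon$ and $\mathbb{E}$ are both normal states which coincide on the two generating families---on $\mathcal{B}(\mathfrak{h})$ by hypothesis and on $\mathcal{C}$ because $\epsilon$ fixes $\mathcal{C}$---and hence agree throughout $\mathcal{A}$. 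This gives $\mathbb{E}[PAQ]=\mathbb{E}[\epsilon(PAQ)]=0$ for every $A\in\mathcal{B}(\mathfrak{h})$. Polarizing by substituting $A\mapsto B^*A$ for arbitrary $B\in\mathcal{B}(\mathfrak{h})$ converts this into the vector-space statement $\langle BP\Omega,\,AQ\Omega\rangle=0$ in the GNS representation, so the $\mathcal{B}(\mathfrak{h})$-cyclic subspaces of $P\Omega$ and $Q\Omega$ are mutually orthogonal. In the tensor decomposition $\mathcal{H}\simeq\mathfrak{h}\otimes\mathfrak{k}$, each such cyclic subspace has the form $\mathfrak{h}\otimes V$, and the orthogonality together with the cyclicity of $\Omega$ forces $P$ to act as $1_\mathfrak{h}\otimes\tilde P$. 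Thus every projection of $\mathcal{C}$ commutes with $\mathcal{B}(\mathfrak{h})$, and by the bicommutant theorem $\mathcal{C}\subset 1_\mathfrak{h}\otimes\mathcal{B}(\mathfrak{k})$.

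The conclusion is then straightforward: writing $\mathcal{C}=1_\mathfrak{h}\otimes\widetilde{\mathcal{C}}$, we have $\mathcal{A}=\mathcal{B}(\mathfrak{h})\vee\mathcal{C}=(\mathcal{B}(\mathfrak{h})\otimes 1_\mathfrak{k})\vee(1_\mathfrak{h}\otimes\widetilde{\mathcal{C}})=\mathcal{B}(\mathfrak{h})\otimes\widetilde{\mathcal{C}}$, the last equality being the standard fact that the join of two spatially separated commuting von Neumann algebras is their spatial tensor product. Identifying $\widetilde{\mathcal{C}}$ with $\mathcal{C}$ completes the proof. The principal obstacle is the passage, in the middle paragraph, from the expectation identity $\mathbb{E}[PAQ]=0$ to the operator-level relation $[P,A]=0$: it is there that one must carefully exploit the cyclicity of $\Omega$, the rigidity of the type $I$ multiplicity structure, and the extension of LTP from $\mathcal{B}(\mathfrak{h})$ alone to the full $*$-algebra generated together with $\mathcal{C}$.
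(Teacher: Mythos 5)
Your route is genuinely different from the paper's (which sorts the generating projectors of $\mathcal{C}$ into those lying in $\mathcal{B}(\mathfrak{h})$ and those in $\mathcal{B}(\mathfrak{h})'$, shows the former sum to $0$ or $1_\mathfrak{h}$, and in the latter case falls back on the dilation of Lemma \ref{lem1}), but your central claim (i) --- that predictability forces $\mathcal{C}\subset\mathcal{B}(\mathfrak{h})'$ --- is false, and the theorem's own hypotheses supply a counterexample. Take $\mathfrak{h}=\mathbb{C}^2$, $E=|0\rangle\langle 0|$, $\mathcal{C}=\{E\}''\subset\mathcal{B}(\mathfrak{h})$, $\varrho$ diagonal and faithful, and $\epsilon(A)=\frac{\mathbb{E}[EAE]}{\mathbb{E}[E]}E+\frac{\mathbb{E}[E^{\perp}AE^{\perp}]}{\mathbb{E}[E^{\perp}]}E^{\perp}$. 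Since $[\varrho,E]=0$ one checks $\mathbb{E}[\epsilon(A)]=\mathbb{E}[A]$ for \emph{every} $A\in\mathcal{A}=\mathcal{B}(\mathfrak{h})\vee\mathcal{C}=M_2(\mathbb{C})$, so all predictability hypotheses hold, yet $\mathcal{C}\not\subset\mathcal{B}(\mathfrak{h})'=\mathbb{C}I$. This is precisely the case the paper cannot exclude and must route through Lemma \ref{lem1}: when the conditioning projectors sit inside $\mathcal{B}(\mathfrak{h})$, the tensor factorisation is obtained only after replacing $(\mathcal{A},\mathbb{E})$ by its canonical dilation on $\mathfrak{h}\otimes\mathfrak{k}$, not for the original $\mathcal{A}$. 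By purporting to prove commutation unconditionally, your argument proves too much and silently discards this case.

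Concretely, your derivation breaks in two places. First, the extension of LTP from the generators to all of $\mathcal{A}$: two normal states agreeing on $\mathcal{B}(\mathfrak{h})$ and on $\mathcal{C}$ need not agree on $\mathcal{B}(\mathfrak{h})\vee\mathcal{C}$ (a state is not determined by its restrictions to two generating subalgebras; already on $M_2\otimes D_2$ a product state and a correlated state with the same marginals differ on products of generators), so $\mathbb{E}[PAQ]=\mathbb{E}[\epsilon(PAQ)]$ is not available from the hypothesis. Second, and fatally, even granting $\langle BP\Omega,AQ\Omega\rangle=0$ for all $A,B\in\mathcal{B}(\mathfrak{h})$ --- which \emph{does} hold in the counterexample above --- this only yields $P\Omega=e_{P}\Omega$, where $e_{P}\in\mathcal{B}(\mathfrak{h})'$ is the projection onto $\overline{\mathcal{B}(\mathfrak{h})P\Omega}=\mathfrak{h}\otimes V_P$. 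You cannot cancel $\Omega$ to conclude $P=e_{P}$: the difference $P-e_{P}$ lies in no algebra for which $\Omega$ is separating. In the example the GNS space is $M_2(\mathbb{C})$ with $\Omega=I$; there $P$ acts by left multiplication by $E$ and $e_{P}$ by right multiplication by $E$, both fix $\Omega$, and they are distinct. So the type-$I$ multiplicity structure does not deliver $[P,\mathcal{B}(\mathfrak{h})]=0$. A repair would have to either restrict claim (i) to the projections of $\mathcal{C}$ not already lying in $\mathcal{B}(\mathfrak{h})$, or treat the inside part by dilation as the paper does.
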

\begin{proof}
   Let $\mathcal{B}(\mathfrak{h})$ be $\mathbb{E}|\mathcal{C}$-predictable, where $\mathbb{E}$ is a normal state given by a density $\varrho$ and $\mathcal{C}$ is the abelian algebra generated by a complete set of projectors $\{P_k:k\in\Omega\}$, then the projectors $P_k$ are either in $\mathcal{B}(\mathfrak{h})$ or in $\mathcal{B}(\mathfrak{h})'$  this means we can  decompose $\{P_k:k\in\Omega\}=\{P_i:i\in\Omega_I\}\cup\{P_j:j\in\Omega_J\}$ where $P_i\in\mathcal{B}(\mathfrak{h})$ and $P_j\in\mathcal{B}(\mathfrak{h})'$, noting that LTP requires $[P_i,\varrho]=0$.
   Now let $\sum_iP_i=P_I$ and $\sum_jP_j=P_J$, then by  completeness $P_I+P_J=1_\mathcal{A}$ and we can write $\mathcal{A}=P_I\mathcal{A}+P_J\mathcal{A}$. Since $P_J\in\mathcal{B}(\mathfrak{h})'$ it follows that $P_jX=X\otimes P_j$ for any $X\in\mathcal{B}(\mathfrak{h})$ and $j\in\Omega_J$, so we can write $P_J\mathcal{A}=\mathcal{B}(\mathfrak{h})\otimes\mathcal{C}_J$, where $\mathcal{C}_J=\{P_j:j\in\Omega_J\}''$, noting that $P_j\in\mathcal{A}'$ too. Since  $P_I\mathcal{A}+P_J\mathcal{A}=\mathcal{A}P_I+\mathcal{A}P_J$, and $P_J\mathcal{A}=\mathcal{A}P_J$, it follows that $P_I\in\mathcal{A}'$, even though $P_i\notin\mathcal{A}'$. But $P_I\in\mathcal{B}(\mathfrak{h})$ so it must either be zero or the identity. If it is zero then we are already done. If it is the identity then Lemma \ref{lem1} concludes the proof.
\end{proof}
The main point about constructing Theorem \ref{th1} is simply to say that the observation of a quantum system \emph{always} has the $\Omega$-observable structure given in Definition \ref{obom}.
Let's return to the von Neumann measurement scheme as described in \eqref{ob11} and suppose we wished to estimate a quantum property $X$ at some time $t$ given an observation at some earlier time $s<t$, then we'd condition $X$ on the two possible wave-functions: $U(r)P\psi(s)$ and $U(r)Q\psi(s)$, where $\psi(s)=U(s)\psi$ and $U(t)=e^{-\mathrm{i}Ht}$, so that
\begin{equation}\label{CE2}
\epsilon_t(X,s)=\frac{\mathbb{E}_s[PU(r)^\dag X U(r)P]}{\mathbb{E}_s[P]}P+\frac{\mathbb{E}_s[QU(r)^\dag X U(r)Q]}{\mathbb{E}_s[Q]}Q
\end{equation}
is the conditional expectation of $X$ at a time $t=r+s>s$, given a complete set of observations, $P+Q=1_\mathfrak{k}$, at time $s$. Here we drop some of the tensor product notation for convenience, but note that $PU(r)^\dag X U(r)P$ should be written more precisely, but rather tediously, as $(1_\mathfrak{h}\otimes P)U(r)^\dag (X\otimes 1_\mathfrak{k}) U(r)(1_\mathfrak{h}\otimes P)$.  However, this conditional expectation will not satisfy LTP with respect to $\mathbb{E}_s$ if $\mathbb{E}_s$ is a pure-state. This is because  $P\in\mathcal{C}$ but $U(r)^\dag (X\otimes 1_\mathfrak{k})U(r)\notin\mathcal{C}'$, so $[1_\mathfrak{h}\otimes P,U(r)^\dag (X\otimes 1_\mathfrak{k})U(r)]\neq 0$. So, in order to have predictability Theorem \ref{th1} tells us that the system must be dilated again, as we shall see next.

\subsection{Trajectories, Causality and Quantum Filtering}

If we are to study quantum systems then we'd like to be able to predict their behaviour. That is, given an observation of the system at some time $s$ we'd like to predict the values of its properties at some later time $t\geq s$.
This leads us to the so called \emph{Non-demolition Principle of Quantum Causality} \cite{ND,NDP}, which, in a simplified  form based on our discussion so far, can be written as follows.
\begin{Corollary}\label{NDP}
A quantum property $X\neg\mathcal{B}(\mathfrak{h})$ is $\mathbb{E}_s|\mathcal{C}_s$-predictable at a time $t$, with respect to an observation at a time $s\leq t$, if and only if $[X_s^t,\mathcal{C}_s]=0$, i.e. $X_s^t\in\mathcal{C}_s'$ where $\mathcal{C}_s$ is the algebra of observations at  time $s$ and $X_s^t={U_s^t}^\dag X U_s^t$, where $U_s^t$ is a unitary evolution operator from $s$ up to $t$.
\end{Corollary}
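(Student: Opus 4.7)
The plan is to unwrap the Law of Total Probability \eqref{LTP} applied to $X_s^t$ with the conditional expectation onto $\mathcal{C}_s$, and then to invoke Theorem \ref{th1} for the non-trivial direction. Let $\{P_k:k\in\Omega\}$ be the complete orthoprojectors generating $\mathcal{C}_s$ as in Definition \ref{obom}, and write the conditional expectation in the canonical form
\begin{equation*}
\epsilon_s(X_s^t)=\sum_k\frac{\mathbb{E}_s[P_k X_s^t P_k]}{\mathbb{E}_s[P_k]}P_k.
\end{equation*}
Applying $\mathbb{E}_s$ cancels the normalising denominators and, by completeness $\sum_kP_k=I$, reduces LTP to the demand $\mathbb{E}_s\bigl[\sum_kP_kX_s^tP_k\bigr]=\mathbb{E}_s[X_s^t]$.

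The direction ($\Leftarrow$) is immediate: if $[X_s^t,P_k]=0$ for every $k$ then $P_kX_s^tP_k=X_s^tP_k$ and summing over $k$ gives $\sum_kP_kX_s^tP_k=X_s^t$, so LTP holds trivially. For the direction ($\Rightarrow$) the strategy is to apply Theorem \ref{th1} to the algebra $\mathcal{B}(\mathfrak{H}_s)$ to which $X_s^t={U_s^t}^\dag X U_s^t$ is affiliated, namely the unitary conjugate under $U_s^t$ of the algebra containing $X\otimes 1$. Since the predictability hypothesis concerns the joint algebra $\mathcal{B}(\mathfrak{H}_s)\vee\mathcal{C}_s$, Theorem \ref{th1} forces the decomposition $\mathcal{B}(\mathfrak{H}_s)\vee\mathcal{C}_s=\mathcal{B}(\mathfrak{H}_s)\otimes\mathcal{C}_s$, and consequently $\mathcal{C}_s\subset\mathcal{B}(\mathfrak{H}_s)'$, so $[X_s^t,\mathcal{C}_s]=0$.

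The main obstacle I anticipate is justifying rigorously that predictability of the single operator $X_s^t$ extends to predictability of the whole algebra $\mathcal{B}(\mathfrak{H}_s)$ cleanly enough to invoke Theorem \ref{th1}; one must rule out the exceptional alternative identified in the discussion after \eqref{ltpeg}, namely $[P_k,\varrho_s]=0$, which would realise LTP only through a state-level coincidence rather than through genuine non-demolition. The clean workaround is to argue at the operator level: $\sum_kP_kX_s^tP_k=X_s^t$ holds for arbitrary (e.g.\ faithful or sufficiently rich pure) states $\mathbb{E}_s$ iff $X_s^t$ is block-diagonal in the spectral resolution of $\mathcal{C}_s$, which is exactly the condition $[X_s^t,P_k]=0$ for every $k$. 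Both perspectives agree, and the corollary is thereby reduced to the structural content of Theorem \ref{th1}.
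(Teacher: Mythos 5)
Your proposal is correct and takes essentially the same route as the paper, whose entire proof is the single line ``Follows immediately from Theorem \ref{th1}'': the substantive direction in both cases is delegated to the tensor-factorisation forced by that theorem, which places $\mathcal{C}_s$ in the commutant of the algebra to which $X_s^t$ is affiliated. Your explicit verification of the ($\Leftarrow$) direction and your flagging of the $[P_k,\varrho_s]=0$ alternative are elaborations the paper omits here (it disposes of that alternative via the dilation in Lemma \ref{lem1}), not a different argument.
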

\begin{proof}
    Follows immediately from Theorem \ref{th1}.
\end{proof}
Here the notation $U_s^t$ has been introduced to allow for more general evolution operators in what follows. They are required to satisfy the cocycle property $U_r^t=U_s^t U_r^s$, $\forall\; t>s>r$, and the evolution $U_s^t=U(t-s)$ where $U(r)=e^{-\mathrm{i}Hr}$ is a special case.
\begin{Corollary}\label{cor2}
      Let  $X\neg\mathcal{B}(\mathfrak{h})$ be $\mathbb{E}_s|\mathcal{C}_s$-predictable at a time $t>s$, where $\mathcal{C}_s\subset\mathcal{B}(\mathfrak{k})$ is the algebra of observation at a time $s$, then $X$ is $\mathbb{E}_t|\mathcal{C}_s$-predictable.
\end{Corollary}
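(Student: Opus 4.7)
The plan is to deduce the conclusion directly from the tensor-product structure of Theorem~\ref{th1}, via a direct check of the Law of Total Probability. First I would invoke Corollary~\ref{NDP}: the hypothesis ``$X$ is $\mathbb{E}_s|\mathcal{C}_s$-predictable at time $t$'' is equivalent to the non-demolition condition $[X_s^t,\mathcal{C}_s]=0$, which in particular places us in the hypotheses of Theorem~\ref{th1}. Consequently the algebra $\mathcal{A}$ generated by $\mathcal{B}(\mathfrak{h})$ and $\mathcal{C}_s$ factorises as $\mathcal{A}=\mathcal{B}(\mathfrak{h})\otimes\mathcal{C}_s$, so $\mathcal{C}_s\subset\mathcal{B}(\mathfrak{h})'$. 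I would then identify $\mathbb{E}_t$ with the Schr\"odinger-evolved state determined by $\mathbb{E}_t[A]=\mathbb{E}_s[A_s^t]$, which inherits normality on $\mathcal{A}$ from $\mathbb{E}_s$ through the unitarity of $U_s^t$.

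Next, I would write down the canonical conditional expectation of the form~\eqref{epce}, namely $\epsilon(X)=\sum_k \mathbb{E}_t[P_k X P_k]\,\mathbb{E}_t[P_k]^{-1}P_k$, summed over those $k\in\Omega$ with $\mathbb{E}_t[P_k]\neq 0$. The crucial observation is that $P_k\in\mathcal{C}_s\subset\mathcal{B}(\mathfrak{h})'$ commutes with $X\in\mathcal{B}(\mathfrak{h})$, and since $P_k$ is idempotent we have $P_k X P_k=XP_k$. Applying $\mathbb{E}_t$, the scalar factor $\mathbb{E}_t[P_k]^{-1}$ cancels against the outer $\mathbb{E}_t[P_k]$, so $\mathbb{E}_t[\epsilon(X)]=\sum_k \mathbb{E}_t[XP_k]=\mathbb{E}_t\bigl[X\sum_k P_k\bigr]=\mathbb{E}_t[X]$ by completeness $\sum_k P_k=I$. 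This is precisely the LTP~\eqref{LTP} for $X$ with respect to $\mathbb{E}_t|\mathcal{C}_s$.

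The content of the corollary is therefore a propagation statement: the substantial work has already been done in establishing the tensor factorisation through Theorem~\ref{th1}, after which predictability relative to the evolved state is essentially automatic because $X$ is a system observable commuting with the apparatus algebra $\mathcal{C}_s$. The only genuine (and mild) obstacle is to justify that the terms dropped by restricting the sum to $\mathbb{E}_t[P_k]\neq 0$ contribute nothing to $\mathbb{E}_t[X]$ either; this is standard, since for a normal state $\mathbb{E}_t$ given by density $\varrho_t$ the vanishing $\mathbb{E}_t[P_k]=0$ forces $P_k\varrho_t=0=\varrho_t P_k$, hence $\mathbb{E}_t[XP_k]=0$ as well, so the restricted sum equals the full sum over $\Omega$.
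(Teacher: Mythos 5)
Your proposal reaches the stated conclusion and is not wrong, but it takes a different route from the paper and, in doing so, shifts the weight of the argument away from where the paper puts it. The paper's proof does not verify the LTP for $\mathbb{E}_t|\mathcal{C}_s$ by direct summation; it uses the non-demolition condition $[X_s^t,\mathcal{C}_s]=0$ to factorise the representing Hilbert space as $\mathfrak{g}\otimes\mathfrak{k}$ with $X_s^t$ \emph{and} $U_s^t$ in $\mathcal{B}(\mathfrak{g})$, deduces the factorisation $U_0^t=(U_s^t\otimes 1_{\mathfrak{k}})U_0^s$ of the total evolution, and from this obtains the term-by-term identity $\mathbb{E}_s[X_s^t\otimes P_k]=\mathbb{E}_t[X\otimes P_k]$. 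That identity is the real content: it says the conditioning of the evolved observable in the state at time $s$ \emph{coincides} with the conditioning of $X$ in the state at time $t$, which is what licenses passing between the two pictures later on. Your argument instead checks $\mathbb{E}_t[\epsilon(X)]=\mathbb{E}_t[X]$ by hand, which --- as you yourself observe --- is automatic once $X\in\mathcal{B}(\mathfrak{h})$ and $\mathcal{C}_s\subset\mathcal{B}(\mathfrak{k})$ sit in commuting tensor factors; that computation holds for \emph{any} state and never actually uses the predictability hypothesis. Where the hypothesis genuinely does work is hidden in your sentence defining $\mathbb{E}_t[A]=\mathbb{E}_s[A_s^t]$: for this $\mathbb{E}_t$ to be the state of the system at time $t$ (the one given by $\psi_t=U_0^t\psi$) and for its restriction to $\mathcal{C}_s$ to reproduce the same weights $\mathbb{E}_s[1_\mathfrak{h}\otimes P_k]$ that appear in the time-$s$ filter, you need $[U_s^t,P_k]=0$, i.e.\ exactly the factorisation $U_0^t=(U_s^t\otimes 1_{\mathfrak{k}})U_0^s$ that the paper extracts from Corollary~\ref{NDP}. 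If you make that one step explicit, your argument is complete and recovers the paper's identity $\mathbb{E}_s[X_s^t\otimes P_k]=\mathbb{E}_t[X\otimes P_k]$ as a by-product; as written, it proves the literal LTP statement but leaves the link between the two conditionings --- the point of the corollary --- implicit.
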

\begin{proof}
    By Corollary \ref{NDP} the commutativity $[X_s^t,P_k]=0$, for $P_k\in\mathcal{C}_s$, means that $X_s^t$ and $P_k$ admit the factorisations $X_s^t\otimes 1_{\mathfrak{k}}$ and $1_\mathfrak{g}\otimes P_k$ such that the representing Hilbert space $\mathcal{H}$ admits the factorisation $\mathfrak{g}\otimes\mathfrak{k}$, where $X_s^t\neg\mathcal{B}(\mathfrak{g})$ with $\mathfrak{g}\supseteq\mathfrak{h}$. Moreover, $U_s^t\in\mathcal{B}(\mathfrak{g})$.
    Now, if we denote the unitary evolution of the total system by $U_0^t\psi$, then we see that an observation at time $s<t$ now implies the factorisation $U_0^t=(U_s^t\otimes 1_{\mathfrak{k}})U_0^s$.
    Thus $\mathbb{E}_s[X_s^t \otimes P_k]=\mathbb{E}_t[X\otimes P_k]$.
\end{proof}

\begin{Lemma}\label{lem2}
    Let $X\neg\mathcal{B}(\mathfrak{h})$ be $\mathbb{E}_t|\mathcal{C}$-predictable where $\mathcal{C}$ is the abelian von Neumann algebra representing the observations at times $r$ and $s$ with $t>s>r>0$. Then the representing Hilbert space admits the factorisation $\mathcal{H}=\mathfrak{g}\otimes\mathfrak{k}_s\otimes\mathfrak{k}_r$, where $\mathfrak{h}\subseteq\mathfrak{g}$.
\end{Lemma}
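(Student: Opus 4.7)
The plan is to iterate the single-observation factorisation of Corollary \ref{cor2} once for each of the two observation times. Since $\mathcal{C}$ is abelian and decomposes as $\mathcal{C} = \mathcal{C}_r \vee \mathcal{C}_s$, where $\mathcal{C}_r$ and $\mathcal{C}_s$ are the abelian algebras representing the observations at times $r$ and $s$ respectively, the $\mathbb{E}_t|\mathcal{C}$-predictability of $X$ entails separate predictability with respect to $\mathcal{C}_r$ and $\mathcal{C}_s$, via the non-demolition condition $[X_r^t, P] = 0$ for any projector $P \in \mathcal{C}_r \cup \mathcal{C}_s$ guaranteed by Corollary \ref{NDP}.

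First, I would invoke Corollary \ref{cor2} at the earliest observation time $r$: predictability of $X$ with respect to $\mathcal{C}_r$ yields a factorisation $\mathcal{H} = \mathfrak{g}_1 \otimes \mathfrak{k}_r$ with $\mathfrak{h} \subseteq \mathfrak{g}_1$, together with the cocycle decomposition $U_0^t = (U_r^t \otimes 1_{\mathfrak{k}_r}) U_0^r$ where $U_r^t \in \mathcal{B}(\mathfrak{g}_1)$, and $\mathcal{C}_r$ identified with $1_{\mathfrak{g}_1} \otimes \mathcal{C}_r \subset \mathcal{B}(\mathfrak{k}_r)$. Next, I would restrict attention to the factor $\mathfrak{g}_1$, where the evolved operator $X_r^s = (U_r^s)^\dag X U_r^s$ lives by virtue of $U_r^s \in \mathcal{B}(\mathfrak{g}_1)$, and where $\mathcal{C}_s$ is realised as $\mathcal{C}_s \subset \mathcal{B}(\mathfrak{g}_1) \otimes 1_{\mathfrak{k}_r}$ because the apparatus coupling at time $s$ is enacted by the evolution $U_r^s$ acting on $\mathfrak{g}_1$ alone. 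Applying Corollary \ref{cor2} a second time to $X_r^s$ on $\mathfrak{g}_1$ with observation algebra $\mathcal{C}_s$ produces $\mathfrak{g}_1 = \mathfrak{g} \otimes \mathfrak{k}_s$ with $\mathfrak{h} \subseteq \mathfrak{g}$. Composing the two factorisations yields $\mathcal{H} = \mathfrak{g} \otimes \mathfrak{k}_s \otimes \mathfrak{k}_r$ as required.

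The main obstacle is justifying that the observation algebra at $s$ is genuinely contained in $\mathcal{B}(\mathfrak{g}_1)$ after the first dilation, rather than becoming entangled with the $\mathfrak{k}_r$ degrees of freedom that record the outcomes of the earlier measurement. This relies on the cocycle property of the unitary family: the interval evolution $U_r^s$ acts trivially on $\mathfrak{k}_r$, so any apparatus coupled between $r$ and $s$ must be a dilation of $\mathfrak{g}_1$ alone, placing $\mathcal{C}_s$ entirely in that factor. A secondary delicacy is propagating the containment $\mathfrak{h} \subseteq \mathfrak{g}_1$ through the second dilation to obtain $\mathfrak{h} \subseteq \mathfrak{g}$; this is immediate because each application of Corollary \ref{cor2} only \emph{adjoins} an auxiliary apparatus space rather than modifying the ambient object Hilbert space. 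The argument extends routinely by induction to any finite set of observation times $t_1 < t_2 < \cdots < t_n < t$, producing a tensor factor $\mathfrak{k}_{t_i}$ for each.
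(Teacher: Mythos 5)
Your proposal is correct in substance but organised differently from the paper's own argument. The paper does not iterate Corollary \ref{cor2}; instead it works directly on the two-measurement dynamics $U_s^t P_{k_2}U_r^s P_{k_1}U_0^r\psi$ and extracts two facts from the non-demolition constraints: first, commuting $P_{k_1}$ through the later evolution in two different orders forces $[P_{k_1},P_{k_2}]=0$; second, since $U_r^s\in\mathcal{C}_r'$ while $[P_{k_2},U_r^s]\neq 0$ (the coupling to the second instrument is enacted by $U_r^s$), one gets $P_{k_2}\notin\mathcal{C}_r$, and together with the arbitrariness of the projectors' bases this yields $\mathcal{C}=\mathcal{C}_s\otimes\mathcal{C}_r$ and hence the triple factorisation of $\mathcal{H}$. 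Your two-step dilation reaches the same conclusion and rests on exactly the same two commutation facts, but it relocates the crucial content: the point you flag as the ``main obstacle'' --- that $\mathcal{C}_s$ must sit inside $\mathcal{B}(\mathfrak{g}_1)\otimes 1_{\mathfrak{k}_r}$ rather than being absorbed into the already-adjoined $\mathfrak{k}_r$ --- is precisely the paper's observation that $P_{k_2}\notin\mathcal{C}_r$ because it fails to commute with $U_r^s$; your justification via the cocycle locality of $U_r^s$ is essentially the ``alternative'' remark at the end of the paper's proof, so you should state it as a derived non-commutation ($[P_{k_2},U_r^s]\neq 0$ versus $\mathcal{C}_r\subset\{U_r^s\}'$) rather than as a plausibility claim about where the apparatus ``must'' live. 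What your route buys is an immediate induction to arbitrary finite chains $t_1<\cdots<t_n$, which the paper only asserts informally when writing down the $n$-fold dynamics; what the paper's route buys is that the commutativity of the two observation algebras is \emph{derived} from causality rather than smuggled in through the hypothesis that $\mathcal{C}$ is abelian, which is the conceptual payload of the lemma. One small notational slip: for the second application of Corollary \ref{cor2} the operator that must commute with $\mathcal{C}_s$ is $X_s^t=(U_s^t)^\dag X U_s^t$, not $X_r^s$.
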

\begin{proof}
    Consider wave-function measurement dynamics of the form $U_s^t P_{k_2}U_r^s P_{k_1}U_0^r\psi$, with respect to an observation $k_1\in\Omega_1$ followed by $k_2\in\Omega_2$, where $\psi=|\psi\rangle\otimes\Phi$. Although the evolution operator $U_r^t=U_s^tU_r^s$ must commute with $P_{k_1}$, in particular $[U_r^s,P_{k_1}]=0$, but the same is not true for $U_0^r$. In fact, we must have $[P_{k_1},U_0^r]\neq 0$ because $U_0^r$ describes the coupling between the quantum object and the apparatus, up to time $r$. Similarly, $[U_s^t,P_{k_2}]=0$ and $[P_{k_2},U_0^s]\neq 0$, where $U_0^s=U_r^sU_0^r$. In particular, $[P_{k_2},U_r^s]\neq 0$. Now it follows that $U_s^t P_{k_2}U_r^s P_{k_1}=P_{k_2}U_s^tU_r^s P_{k_1}=P_{k_2} P_{k_1}U_s^tU_r^s$, but also $U_s^t P_{k_2}U_r^s P_{k_1}=P_{k_1}U_s^t P_{k_2}U_r^s =P_{k_1} P_{k_2}U_s^tU_r^s$. This must hold in general, so we see that $[P_{k_1},P_{k_2}]=0$. That is, if $\mathcal{C}=\mathcal{C}_r\vee\mathcal{C}_s$ then $\mathcal{C}_s\subset\mathcal{C}_r'$, and due to the arbitrariness of the projectors' bases it follows that $\mathcal{C}=\mathcal{C}_s\otimes\mathcal{C}_r$. Alternatively, notice that we can write  $U_r^s\in\mathcal{B}\subset\mathcal{C}_r'$, so $\mathcal{C}_r\subset\mathcal{B}'$, and since $[P_{k_2},U_r^s]\neq 0$ we see $P_{k_2}\notin \mathcal{B}'$ and thus $P_{k_2}\notin\mathcal{C}_r$.
 \end{proof}

The consequence of this is that each instance of an observation must correspond to a new copy of the measuring instrument, collectively forming an apparatus. So now let's consider a chain (ordered set) of $n$ measurements at times
\begin{equation}
    \vartheta=\{t_n>t_{n-1}>\cdots>t_1\},\quad|\vartheta|=n
\end{equation}
and so the dynamics of the object-apparatus wave-function in the Schr\"odinger picture has the {general} form
\begin{equation}\label{gm1}
    \psi_t(\vartheta)=(U_{t_n}^t\otimes 1_n\otimes1_{n-1}\otimes\cdots\otimes 1_1)(U^{t_n}_{t_{n-1}}\otimes 1_{n-1}\otimes\cdots \otimes1_1)\cdots U_0^{t_1}\psi
\end{equation}
in the Hilbert space $\mathcal{H}(\vartheta)=\mathfrak{g}\otimes \mathfrak{k}_n\otimes\cdots\otimes\mathfrak{k}_1$, where $\psi=|\psi\rangle\otimes |\xi_{n}\rangle\otimes\cdots\otimes|\xi_{1}\rangle$ and $U^{t_i}_{t_{i-1}}\in\mathcal{B}(\mathfrak{g}\otimes\mathfrak{k}_n\otimes\cdots\otimes\mathfrak{k}_i)$.

One might think that the intuitive approach to this measurement dynamics is to consider only a Hilbert space $\mathfrak{g}=\mathfrak{h}\otimes\mathfrak{k}$ of an object coupled to a single copy of the instrument,  and consequently consider an evolution in $\mathcal{B}(\mathfrak{g})$ of the form $U_{t_n}^t E_{k_n} U_{t_{n-1}}^{t_n} E_{k_{n-1}}\cdots E_{k_1} U_0^{t_1}\psi=U_{t_n}^t(V(\vartheta)|\psi\rangle\otimes |\xi_{k_n}\rangle)$, where $U_r^s$ has the Hamiltonian form $e^{-\mathrm{i}H(s-r)}$ and $\psi=|\psi\rangle\otimes|\xi_0\rangle$ and $E_{k_{n}}=1_\mathfrak{h}\otimes P_{k_n}$. However, such evolution gives a very specific sequence of contractions on $\mathfrak{h}$:
\begin{equation}\label{gm2}
    V(\vartheta)|\psi\rangle=V_{k_nk_{n-1}}(t^n_{n-1})\cdots V_{k_10}(t_1)|\psi\rangle
\end{equation}
where $t^n_{n-1}:=t_n-t_{n-1}$ and $V_{jk}(r)=(1_\mathfrak{h}\otimes\langle\xi_j|)e^{-\mathrm{i}Hr}(1_\mathfrak{h}\otimes|\xi_k\rangle)$,
and the dilation of this dynamics that results in a predictable structure, necessarily given by \eqref{gm1}, does not appear to admit the specific contractions in \eqref{gm2}, as any attempt to identify the unitary operators $U_{t_{i-1}}^{t_{i}}$ in \eqref{gm1} with Hamiltonian evolution fails.
Nonetheless, there is a way to rectify this issue  if one considers  the Hilbert space $\mathcal{H}(\vartheta)$ as above with $\mathfrak{k}_i=\mathfrak{k}$ and $\mathfrak{g}=\mathfrak{h}\otimes\mathfrak{k}$, and consider measurement dynamics \eqref{gm2} resulting from unitary operators $U^{t_i}_{t_{i-1}}$ in \eqref{gm1} that \emph{do not have Hamiltonian form}. Before establishing this the following definition will be useful.
\begin{Definition}
    Let $\mathcal{H}(\vartheta)=\mathfrak{g}\otimes \mathfrak{k}_n\otimes\cdots\otimes\mathfrak{k}_1$, then the (semi-tensor) chronological $\odot$-product  is defined as an operator product on $\mathcal{B}(\mathfrak{g})$ and a tensor product between any $\mathcal{B}(\mathfrak{k}_i)$ and $\mathcal{B}(\mathfrak{k}_j)$. For example, if $X_i\otimes Y_i\neg \mathcal{B}(\mathfrak{g}\otimes\mathfrak{k}_i)$ then
 \begin{equation}
 (X_i\otimes Y_i)\odot (X_j\otimes Y_j)=(X_iX_j)\otimes Y_i\otimes Y_j
 \end{equation}
 where the chronology of the action of $X_i$ and $X_j$ is preserved on $\mathfrak{g}$.
 \end{Definition}
Now we can define the wave-function dynamics with respect to a Hamiltonian $\hbar H\neg\mathcal{B}(\mathfrak{g})$ as
\begin{equation}\label{gm3}
   \psi_t(\vartheta)= (e^{-\mathrm{i}Ht}\otimes I(\vartheta)) S(t_n)\odot \cdots \odot S(t_1) \psi
\end{equation}
where $\psi=|\psi\rangle\otimes|\xi_0\rangle \otimes |0\rangle^{\otimes n}$ and $I(\vartheta)=1_n\otimes\cdots\otimes 1_1$ is the apparatus identity operator, and
\begin{equation}
    S(t_i)=(e^{\mathrm{i}Ht_i}\otimes 1_i)S(e^{-\mathrm{i}Ht_i}\otimes 1_i)\in\mathcal{B}(\mathfrak{g}\otimes \mathfrak{k}_i)
\end{equation}
is the `proper' object-apparatus interaction operator. It is both unitary and self-adjoint and, in the standard basis, has the form
\begin{equation}
S=\left(
    \begin{array}{ccc}
      E_0  & E_1 & \ldots \\
      E_1 & 1_\mathfrak{g}-E_1& 0\\
      \vdots  & 0 & \ddots
    \end{array}
  \right),
\end{equation}
which operates in $\mathcal{B}(\mathfrak{k}\otimes\mathfrak{k}_i)$ at the instant of time $t_i$, where $\mathfrak{k}\subset \mathfrak{g}=\mathfrak{h}\otimes\mathfrak{k}$.
The dynamics described by \eqref{gm3} satisfies all of the requirements of observation. First of all it is consistent with \eqref{gm1} if one defines $U^{t_i}_{t_{i-1}}=e^{-\mathrm{i}Ht_i}S(t_i)e^{\mathrm{i}Ht_{i-1}}$, with $S(t)=1$ if there is no measurement at the present time $t$ otherwise we'd replace $t$ with $t_{n+1}$. Secondly, the action of the projector $1_\mathfrak{g}\otimes P_{k_n}\otimes\cdots\otimes P_{k_1}$ on \eqref{gm3} gives, precisely, the dynamics in $\mathfrak{g}$ described by \eqref{gm2}.

At this stage it looks like all is well. However, on closer inspection there is some conceptual ambiguity. On the one hand it looks like the quantum system under observation is $\mathfrak{g}=\mathfrak{h}\otimes\mathfrak{k}$ and the apparatus is the remaining $\mathfrak{k}^{\otimes n}$, but on the other hand, we had liked to interpret the factor $\mathfrak{k}\subset\mathfrak{g}$ as the apparatus, which it is, in part. After all,  the copy of the apparatus contained within $\mathfrak{g}$ encodes all of the interaction with the quantum system $\mathfrak{h}$ but the \emph{actual observations} are encoded in $\mathfrak{k}^{\otimes n}$. In fact,  the interaction operators $S(t_i)$ couple a $\mathfrak{k}_i$,  from the observable part of the apparatus $\mathfrak{k}^{\otimes n}$, with the non-observable part of the apparatus $\mathfrak{k}\subset\mathfrak{g}$. Moreover, the unitary  interaction $S$ is not of the Hamiltonian form, it is a \emph{spontaneous interaction}.

The issue now is whether or not we need to have a copy of $\mathfrak{k}$ entangling with the quantum system $\mathfrak{h}$ in this Hamiltonian manner, i.e. $e^{-\mathrm{i}Ht}$ on $\mathfrak{g}=\mathfrak{h}\otimes\mathfrak{k}$, because we could instead just have $\mathfrak{g}=\mathfrak{h}$ and consider the object-apparatus coupling arising from spontaneous interactions $S$ on $\mathfrak{h}\otimes\mathfrak{k}_i$, where $\mathfrak{k_i}\subset \mathfrak{k}^{\otimes n}$. Another way of putting this is: why do we want \eqref{gm2} to be satisfied? As motivation for asking this question
consider the case where an atom $\mathfrak{h}$ is represented as a two level system and its decay is observed via its coupling to Schr\"odinger's cat. In this case the cat is serving as the instrument of the apparatus and if we have interaction given by \eqref{gm3} with $\mathfrak{g}=\mathfrak{h}\otimes\mathfrak{k}$ then it is not clear how to construct a Hamiltonian that will \emph{not} allow the cat to die more than once. However, in Section \ref{4} it is shown that we can take $\mathfrak{g}=\mathfrak{h}$ and have a complete quantum theory of Schr\"odinger's cat that is consistent with reality.  First, we must complete our construction of the apparatus. So far, we have only considered the case of $n$ observations at fixed times $t_i$ in a chain $\vartheta$, but we'd like to consider arbitrary instants of observation.

\begin{Definition}
    A trajectory is a time-ordered sequence of observations  up to a time $t$, represented by a projector $\Pi_t(\vartheta)$ as
    \begin{equation}
    \Pi_t(\vartheta)=I(\vartheta_{t})\otimes P_{k_n}(t_n)\otimes\cdots\otimes P_{k_1}(t_1)
    \end{equation}
    where $\vartheta=\vartheta_{[t}\sqcup\vartheta^t$ is the decomposition of the chain $\vartheta$ into future $\vartheta_{[t}$ and past $\vartheta^t$, and $|\vartheta^t|=n$.
\end{Definition}

 \begin{Theorem}
An apparatus capable of measurement outcomes $k\in\Omega$ at random times $\vartheta=\{t_n>\cdots>t_1\}$ in a connected interval $T\subseteq\mathbb{R}$ for a random number of observations $|\vartheta|=n$ is represented on the Guichardet second quantisation $\mathcal{F}=\Gamma(\mathcal{K})$ of the Hilbert space $\mathcal{K}=\mathfrak{k}\otimes L^2(T)$, where $\mathfrak{k}=L^2(\Omega)$. Moreover, $\mathcal{F}$ is a continuous tensor product space, admitting the future-past decomposition $\mathcal{F}=\mathcal{F}_{[t}\otimes\mathcal{F}^t$ at any time $t\in T$.
 \end{Theorem}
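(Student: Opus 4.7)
The strategy is to pass from a fixed chain of observations to the space of \emph{all possible} chains in $T$, with an arbitrary number of observations. By Lemma \ref{lem2}, each observation at a time $t_i$ requires a fresh tensor factor $\mathfrak{k}_i\cong\mathfrak{k}=L^2(\Omega)$, so a chain $\vartheta=\{t_n>\cdots>t_1\}\subset T$ with fixed $n$ and fixed time-support gives an apparatus Hilbert space $\mathfrak{k}^{\otimes n}\cong L^2(\Omega^n)$, recording the $n$-tuple of outcomes.

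Next I would allow the \emph{locations} of observation times to vary. Since the factors attached to distinct instants commute tensorially, the natural configuration space is the set $T^n/S_n$ of unordered $n$-point subsets of $T$. Attaching the outcome space $\mathfrak{k}$ to each $t\in T$ produces the one-particle space $\mathcal{K}=\mathfrak{k}\otimes L^2(T)$, and the space of square-integrable apparatus wave-functions at fixed count $n$ is its $n$-fold symmetric tensor power $\mathrm{Sym}^n(\mathcal{K})$. Because the number of observations itself is random, I take the direct sum over $n\geq 0$:
\begin{equation*}
\mathcal{F} \;=\; \bigoplus_{n=0}^{\infty} \mathrm{Sym}^n(\mathcal{K}) \;=\; \Gamma(\mathcal{K}),
\end{equation*}
with the sector $n=0$ supplying the one-dimensional `no-observations-yet' vacuum. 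Writing $\Gamma_T$ for the set of finite subsets of $T$ equipped with the symmetric Lebesgue measure, the Guichardet realisation identifies $\Gamma(\mathcal{K})$ with $L^2(\Gamma_T,\mathfrak{k}^{\otimes|\vartheta|})$, and the trajectory projectors $\Pi_t(\vartheta)$ defined just before the theorem act consistently as multiplication operators on this space.

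For the continuous tensor product decomposition I would appeal to the exponential property of bosonic second quantisation, $\Gamma(\mathcal{K}_1\oplus\mathcal{K}_2)\cong\Gamma(\mathcal{K}_1)\otimes\Gamma(\mathcal{K}_2)$. At any $t\in T$ the orthogonal splitting $L^2(T)=L^2(T\cap[t,\infty))\oplus L^2(T\cap(-\infty,t))$ induces $\mathcal{K}=\mathcal{K}_{[t}\oplus\mathcal{K}^t$, and hence $\mathcal{F}=\mathcal{F}_{[t}\otimes\mathcal{F}^t$, which is the required future-past factorisation. The main obstacle I anticipate is not algebraic but measure-theoretic and conceptual: one must justify that the Guichardet symmetric measure on $\Gamma_T$ is indeed the correct $\sigma$-finite measure for simultaneously encoding a random count \emph{and} random observation times, and that the Fock vacuum plays the role of the uninstrumented apparatus in a way compatible with the cocycle structure of the evolution $U_s^t$ appearing in Corollary \ref{NDP}.
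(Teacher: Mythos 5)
Your proposal is correct and follows essentially the same route as the paper: build the $n$-observation sector from Lemma \ref{lem2}, take the direct sum over $n\in\mathbb{N}_0$ with a one-dimensional `no observation' vacuum, and obtain the future-past factorisation from the orthogonal splitting of $L^2(T)$ at $t$. The only cosmetic difference is that the paper realises the $n$-sector as $\mathfrak{k}^{\otimes n}\otimes L^2(\mathcal{T}_n)$ over the ordered simplex and proves the factorisation directly via the sum-integral formula $\iint f(\varkappa,\varsigma)\,\mathrm{d}\varkappa\,\mathrm{d}\varsigma=\int\sum_{\varkappa\sqcup\varsigma=\vartheta}f(\varkappa,\varsigma)\,\mathrm{d}\vartheta$, whereas you use the unitarily equivalent symmetric power $\mathrm{Sym}^n(\mathcal{K})$ and cite the exponential property $\Gamma(\mathcal{K}_1\oplus\mathcal{K}_2)\cong\Gamma(\mathcal{K}_1)\otimes\Gamma(\mathcal{K}_2)$, which is precisely what that formula establishes in the Guichardet picture.
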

 \begin{proof}
 For any given chain $\vartheta\subset T$ Lemma \ref{lem2} implies that the apparatus Hilbert space has the form $\mathcal{F}(\vartheta):=\mathfrak{k}^{\otimes |\vartheta|}$, where $|\vartheta|$ is the cardinality of $\vartheta$. For any $\vartheta=\vartheta_n$ of fixed cardinality $|\vartheta_n|=n$ each Hilbert space $\mathcal{F}(\vartheta_n)$ may be regarded as a fibre of a trivial Hilbert bundle $\mathcal{F}_n$ over an $n$-simplex $\mathcal{T}_n$ of the chains $\vartheta_n$. To see this, note that any function $\Phi_n\in\mathcal{F}_n$ has evaluations $\langle\vartheta_n|\Phi_n=\Phi_n(\vartheta_n)$, given with respect to the generalised Dirac-type eigenvectors $|\vartheta\rangle=|t_n\rangle\otimes\cdots\otimes|t_1\rangle$ with $\langle t|s\rangle=\delta(t-s)$. Moreover, the requirement for $\mathcal{F}_n$ to also be a Hilbert space  is that it is eqiupped with an inner-product defining the squared norm
 \begin{equation}\label{nmeas}
 \Phi_n^\dag\Phi_n^{}=\int _{\mathcal{T}_n}\|\Phi_n(\vartheta_n)\|^2\mathrm{d}\vartheta_n:=\underset{t_1<\cdots <t_n}{\int\cdots\int}  \|\Phi_n(t_1,\ldots,t_n)\|^2\mathrm{d}t_1\cdots\mathrm{d}t_n<\infty
 \end{equation}
 where $\|\Phi_n(t_1,\ldots,t_n)\|$ is the norm induced by the inner-product for $\mathfrak{k}^{\otimes n}$,   and we see that $\mathcal{F}_n\cong \mathfrak{k}^{\otimes n}\otimes L^2(\mathcal{T}_n)$. Note that if the measurement outcomes $\Omega$ were regarded to vary at different times then the Hilbert bundle would be non-trivial. Finally, the requirement for $n$ to take \emph{any} value in the disjoint union $\mathbb{N}_0:=\mathbb{N}\sqcup\{0\}$ is that the whole Hilbert space representing the apparatus is the \emph{Guichardet-Fock space} \cite{GUI} defined as
 \begin{equation}\label{gfs}
\mathcal{F}=\bigoplus_{n\in\mathbb{N}_0}\mathcal{F}_n,\quad\mathcal{F}_0:=\mathbb{C}
 \end{equation}
 which may be identified with a restriction of the full Fock space over $\mathcal{K}:=\mathfrak{k}\otimes L^2(T)$ to a simplex of chains having the inner product
 \begin{equation}\label{nmeas2}
 \Phi^\dag\Phi=\int _\mathcal{T}\|\Phi(\vartheta)\|^2\mathrm{d}\vartheta:=\sum_{n\in\mathbb{N}_0}\int _{\mathcal{T}_n}\|\Phi_n(\vartheta_n)\|^2\mathrm{d}\vartheta_n<\infty
 \end{equation}
 for any $\Phi=\oplus_n\Phi_n\in\mathcal{F}$, where $\mathcal{T}=\cup\mathcal{T}_n$ is the simplex of chains of any length, including the empty chain $\varnothing$, $|\varnothing|=0$.

 The continuous tensor product structure arises from the arbitrary decomposability $\mathcal{F}=\mathcal{F}^t\otimes\mathcal{F}_{[t}$ at any time $t\in T$. This is proved as follows. First identify $T=[a,b)$, $a<b$, then consider the disjunction $T=[a,t)\sqcup[t,b):=T^t\sqcup T_{[t}$. This corresponds to the disjunction of chains $\vartheta=\vartheta^t\sqcup\vartheta_{[t}$. Now, there is the so-called sum-integral formula \cite{CHAO} for functions of chains, for $f\in L^1(\mathcal{T}\times\mathcal{T})$ it is
 \begin{equation}\label{suin}
 \iint f(\varkappa,\varsigma)\mathrm{d}\varkappa\mathrm{d}\varsigma=\int\sum_{\varkappa\sqcup\varsigma=\vartheta}f(\varkappa,\varsigma)\mathrm{d}\vartheta. \end{equation}
 As a special case one can consider functions $f\in L^1(\mathcal{T}^t\times\mathcal{T}_{[t})$ and embed them into $L^1(\mathcal{T}\times\mathcal{T})$ as $f(\varkappa,\varsigma):=0$ if $\varkappa\cap T_{[t}\neq\varnothing$ or $\varsigma\cap T^t\neq\varnothing$. This gives \eqref{suin} in the form
 \begin{equation}\label{ion}
 \iint f(\vartheta^t,\vartheta_{[t})\mathrm{d}\vartheta^t\mathrm{d}\vartheta_{[t}=\int f(\vartheta^t,\vartheta\setminus\vartheta^t)\mathrm{d}\vartheta.
 \end{equation}
  Now, given any function $\phi \in\mathcal{F}^t\otimes\mathcal{F}_{[t}$ we can establish one-to-one correspondence with a function $\Phi\in\mathcal{F}$ simply as
 \begin{equation}
\Phi(\vartheta)=\phi(\vartheta^t,\vartheta\setminus\vartheta^t)
 \end{equation}
 and \eqref{ion} establishes the equivalence of norms $\|\Phi\|_\mathcal{F}=\|\phi\|_{\mathcal{F}^t\otimes\mathcal{F}_{[t}}$.
 In this way we can define the Guichardet-Fock second quantisation functor $\Gamma:\mathcal{K}\rightarrow\mathcal{F}$ having the property $\Gamma(\mathcal{K}^t\oplus\mathcal{K}_{[t})=\Gamma(\mathcal{K}^t)\otimes\Gamma(\mathcal{K}_{[t})$.
 \end{proof}
 To understand the full picture for the apparatus as represented by $\mathcal{F}$ we shall consider the prepared instrument state-vector $|\xi\rangle$ and compose this with a function $\sqrt{\nu}\in L^2(T)$ describing how the measurements of the quantum system are localised in time. Now we have $\xi_\nu=|\xi\rangle\otimes\sqrt{\nu}\in\mathcal{K}$ and the state of the apparatus is represented by the normalised vector $\Phi=\xi_\nu^\otimes e^{-\frac{1}{2}\int\nu\mathrm{d}t}\in\mathcal{F}$, which may also be given by the action of the unitary Weyl-operator $W(\xi_\nu)$ on the vacuum vector $\delta_\varnothing:=0^\otimes$. Here, $\langle\vartheta_n|\xi_\nu^\otimes=\xi_\nu(t_n)\otimes\cdots\otimes\xi_\nu(t_1)$.  We shall now see that $\nu$ is the \emph{observation frequency}, describing the rate at which the quantum system is being observed, the intensity of a quantum Poisson process.
 \begin{Corollary}
 The Hilbert space $\mathcal{H}=\mathfrak{h}\otimes\mathcal{F}$, where $\mathcal{F}=\Gamma(\mathfrak{k}\otimes L^2(T))$, of a quantum system under observation is also the representing Hilbert space for the quantum stochastic dynamics of a quantum system $\mathfrak{h}$  subjected to quantum noise with degrees of freedom determined by the dimensions of $\mathfrak{k}$.
 \end{Corollary}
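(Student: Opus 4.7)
The plan is to identify the Hilbert space $\mathcal{H}=\mathfrak{h}\otimes\mathcal{F}$ constructed from the apparatus with the standard carrier space of Hudson--Parthasarathy quantum stochastic calculus, and then to exhibit on it the fundamental noise processes whose multiplicity equals $\dim\mathfrak{k}$. The statement is essentially a structural identification, so the bulk of the work is recognising that the Guichardet realisation $\Gamma(\mathfrak{k}\otimes L^2(T))$ coincides, as a Hilbert space with its continuous tensor decomposition, with the symmetric Fock space over $\mathfrak{k}\otimes L^2(T)$ on which quantum stochastic integrals against creation, annihilation and gauge processes are defined.

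First I would note that the restriction to ordered chains $\vartheta$ together with the inner product \eqref{nmeas2} is the standard way of undoing the $n!$-symmetrisation of the full bosonic Fock space: the map $\Phi_n\mapsto \tfrac{1}{\sqrt{n!}}\mathrm{Sym}\,\Phi_n$ is an isometry from $\mathcal{F}_n$ onto the $n$-particle sector of the symmetric Fock space over $\mathcal{K}=\mathfrak{k}\otimes L^2(T)$. Thus $\mathcal{F}\cong\Gamma_{\mathrm{sym}}(\mathcal{K})$, and the exponential functoriality $\Gamma(\mathcal{K}^t\oplus\mathcal{K}_{[t})=\Gamma(\mathcal{K}^t)\otimes\Gamma(\mathcal{K}_{[t})$ already established in the theorem is precisely the continuous tensor product structure required to make sense of adapted processes and It\^o-type increments.

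Second, I would invoke this continuous tensor decomposition to define, on $\mathfrak{h}\otimes\mathcal{F}$, the fundamental quantum stochastic integrators of Hudson--Parthasarathy: for each $e\in\mathfrak{k}$, creation $A^*_t(e)$, annihilation $A_t(e)$, for each $F\in\mathcal{B}(\mathfrak{k})$ the gauge process $\Lambda_t(F)$, and the time process $tI$. All four families are adapted with respect to the filtration $\mathcal{F}^t\otimes\mathcal{F}_{[t}$, their increments act trivially on the past factor, and the multiplicity of the noise is controlled by the dimension of $\mathfrak{k}$ because a basis of $\mathfrak{k}$ indexes an independent family of $(A,A^*,\Lambda)$-triples. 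A generic quantum stochastic differential equation
\begin{equation}
\mathrm{d}U_t=\bigl(L^\alpha_\beta\otimes \mathrm{d}\Lambda^\beta_\alpha(t)\bigr)U_t,\qquad U_0=I,
\end{equation}
with coefficients $L^\alpha_\beta\neg\mathcal{B}(\mathfrak{h})$ satisfying the Hudson--Parthasarathy unitarity conditions then has a unique unitary solution on $\mathfrak{h}\otimes\mathcal{F}$, giving a cocycle $U_s^t$ of the form demanded in Corollary~\ref{NDP}.

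The step I expect to be the main obstacle is not any single calculation but rather the careful identification of the apparatus' exponential/coherent vectors $\Phi=\xi_\nu^{\otimes}e^{-\tfrac12\int\nu\,\mathrm{d}t}$ with the Fock coherent vectors $W(\xi_\nu)\delta_\varnothing$ in such a way that the physical interpretation of $\nu$ as an observation frequency, i.e.\ the intensity of the underlying quantum Poisson (gauge-type) process, is genuinely the one that emerges from the QS calculus; this is what certifies that the representation obtained by dilating the chain-indexed measurement dynamics coincides with the standard quantum noise model, rather than merely being isomorphic to it as an abstract Hilbert space.
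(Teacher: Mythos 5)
Your proposal is correct and follows the same route as the paper: the paper's entire proof is a one-line appeal to the Hudson--Parthasarathy construction in \cite{HP}, and what you have written is a faithful unpacking of exactly that identification (Guichardet chain realisation $\cong$ symmetric Fock space over $\mathfrak{k}\otimes L^2(T)$, continuous tensor decomposition, fundamental integrators of multiplicity $\dim\mathfrak{k}$). The only difference is that you supply the details the paper delegates to the citation, including the coherent-vector/Poisson-intensity identification, which the paper itself only addresses in the surrounding discussion rather than in the proof.
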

 \begin{proof}
     The representing Hilbert space of quantum stochastic differential equations is well-known and originally constructed in \cite{HP}.
 \end{proof}

 The structure of Quantum Stochastic Calculus (QSC) began to emerge in the late 1960's and 1970's when Davies, Lindblad \emph{et al.} established a description of quantum dynamical semigroups \cite{L,D,G}. But the Lindblad equation only described an average of a stochastic dynamics of a quantum system in $\mathcal{B}(\mathfrak{h})$, and it was not until  Hudson and Parthasarathy's discovery of the structure of quantum stochastic calculus (generalising It\^o's classical theory) that the Lindblad equation could be properly realised as the marginal dynamics of a unitary evolution on a pure-state in $\mathfrak{h}\otimes\mathcal{F}$.

 At this stage one might object to the underlying assumption in this article that the (indirect) observations of the quantum system $\mathfrak{h}$ are `spontaneous'. That is, they are represented by projections defined at  singular instants of time $t_i\in\vartheta$. However, this QSC formalism allows for continuous observation too \cite{BS,CM}.  The fundamental example of such continuous observation corresponds to quantum Brownian motion, and such diffusive type observation was also studied in \cite{BLP,Di,GG}.
 Here we are restricting our attention to a discrete sequence of observations of a quantum system (which corresponds to a quantum Poisson process) just to see how the apparatus Hilbert space is built. In fact, we could have considered continuous observation from the start by considering a differential version of the conditional expectation, but this involves a generalisation of the Hilbert space theory to a Krein space theory \cite{CHAO,MFB}.

 The dynamics given by \eqref{gm3} is fully realised on $\mathcal{H}=\mathfrak{g}\otimes\mathcal{F}$ as $\psi_t=U_0^t\psi$, with $\min T=0$, and described by the quantum stochastic differential equation (QSDE)
 \begin{equation}\label{34}
\mathrm{d}\psi_t(\vartheta)+\mathrm{i}H\psi_t(\vartheta)\mathrm{d}t=L(t)\odot\psi_t(\vartheta\setminus t)\mathrm{d}n_t(\vartheta),
 \end{equation}
 where $H\equiv H\otimes I(\vartheta)$, $L(t)=(S-I)\xi_\nu(t)$ and $n_t(\vartheta)=|\vartheta^t|I(\vartheta)$ is fundamental  counting process with $\mathrm{d}n_t(\vartheta)=1$ if $t\in\vartheta$ and otherwise zero, and it has expectation $\nu t$ determined by the pure-state $\mathbb{E}_t$ given by $\psi_t=U_0^t(|\psi\rangle\otimes\Phi)$. The only difference between this $\psi_t(\vartheta)$ and that in \eqref{gm3} is the  $n$ extra factors of $\sqrt{\nu}$.

This generalises the usual Schr\"odinger dynamics and is just a special case of more general QSDEs which can be written in Belavkin's concise form \cite{CHAO}, also resembling the formalism in \cite{LM}, as
 \begin{equation}
\mathrm{d}U_0^t=L^\mu_\kappa(t)\mathrm{d}\Lambda^\kappa_\mu(t) U_0^t
 \end{equation}
 with respect to the quantum stochastic differential increments of creation $\mathrm{d}\Lambda^+_k$, annihilation $\mathrm{d}\Lambda^k_-$, preservation $\mathrm{d}\Lambda^+_-=\mathrm{d}t$ and scattering $\mathrm{d}\Lambda^i_k=\mathrm{d}N^i_k$, satisfying the Hudson-Parthasarathy multiplication table, where $i,k\in\Omega$, $\kappa\in\{\Omega,+\}$ and $\mu\in\{-,\Omega\}$ and $\kappa$ and $\mu$ are summed over. There is a rich structure encoded in the Belavkin Representation of QSC. To recover \eqref{34} just take $L^-_k=0=L^k_+$, $L^-_+(t)=-\mathrm{i}H$ and $ L^k_i(t)\mathrm{d}N^i_k(t)=(S^k_i-\delta^k_i)\mathrm{d}N^i_k(t)=(S-I)\mathrm{d}n_t$.

 The unitary quantum stochastic evolution operator $U_0^t$ is adapted to $\mathcal{F}_{[t}\otimes\mathcal{H}^t$, which means that it acts as the identity on the apparatus' future $\mathcal{F}_{[t}$. Further, it forms a cocycle, such that
 \begin{equation}
 U_r^t=U_s^t U_r^s.
 \end{equation}
 In fact, for the adapted $\odot$-product dynamics of \eqref{34} we have
 \begin{equation}
 U_r^t\equiv I^{}_{[t}\otimes U_{[s}^t\odot U_{[r}^s\otimes I^r,
 \end{equation}
 where lower indices $[x$ mean "from, and including, $x$" and raised indices $x$ mean "up to, and excluding, $x$".
 So we now see that in the case of countable observations the interaction dynamics, coupling an apparatus to a quantum system, is given by $U_0^t$, a unitary Poisson process with rate $\nu$. Such Poisson processes admit the diagonal form
 \begin{equation}
 U_0^t=e^{-\mathrm{i}Ht}\Big(I_{[t}\otimes\int_{\mathcal{T}^t} |\vartheta\rangle S^\odot(\vartheta)\langle \vartheta|\mathrm{d}\vartheta\Big),
 \end{equation}
 where $S^\odot(\vartheta)$ is the $\odot$-product of the interaction operators $S(t_i)=e^{\mathrm{i}Ht_i}Se^{-\mathrm{i}Ht_i}$, where $t_i\in\vartheta$.
In the case of more general (continuous) observations the unitary operator $U_0^t$ does not have this diagonal form, but is still represented on $\mathfrak{h}\otimes\mathcal{F}$.

The Lindblad dynamics of a quantum system is simply given by $\varrho(t)=\texttt{Tr}_\mathcal{F}[\psi_t^{}\psi_t^\dag]$, and general Lindblad equations are given as the marginal dynamics of quantum unitary Brownian motions. However, such unitary Poisson processes that we restrict our attention to here only give us a special class of Lindblad equations of the form
\begin{equation}\label{39}
 \dot\varrho
 =\mathrm{i}[\varrho,H_\text{eff}]+\widetilde{L}(\varrho\otimes 1_\mathfrak{k})\widetilde{L}^\dag
 -\tfrac{1}{2}\{L^\dag L,\varrho\}=\mathrm{i}[\varrho,H]+\nu(\varphi(\varrho)-\varrho)
\end{equation}
where the time dependence has been dropped from the notation for simplicity. Here, $H_{\text{eff}}=H+V_\text{eff}(t)$ and $V_{\text{eff}}(t)=\mathrm{i}\xi_\nu^\dag(t)(S-S^\dag)\xi_\nu(t)/2$ is an effective potential arising from the interaction,  $L(t)=(S-I)\xi_\nu(t)\in \mathcal{B}(\mathfrak{h})\otimes\mathfrak{k}$ is a column-vector of operators and $\widetilde{L}$ is its partial transposed row (transpose in $\mathfrak{k}$ only), $\varphi$ is a unital CP map and $\{A,B\}=AB+BA$ is the anti-commutator. The second equality in \eqref{39} follows from direct computation from the definition of $L$ and the identity $\xi_\nu^\dag S\xi_\nu=\xi_\nu^\intercal \widetilde{S}\bar\xi_\nu$, noting that $\nu\varphi(\varrho):=\xi_\nu^\intercal \widetilde{S}(\varrho\otimes 1_\mathfrak{k})\widetilde{S}^\dag\bar\xi_\nu$.

The final part of the full observation dynamics outlined here is the conditioning of quantum properties $X\neg\mathcal{B}(\mathfrak{h})$ with respect to the algebra of all trajectories up to a time $t$, denoted by $\mathcal{C}_t\subset\mathcal{B}(\mathcal{F})$. By construction $\mathcal{B}(\mathfrak{h})\subset\mathcal{A}_t=\mathcal{C}'_t$ and so $X$ is $\mathbb{E}_t|\mathcal{C}_s$-predictable, where $s\leq t$ and $\mathbb{E}_t$ is a pure state given by $\psi_t=U_0^t\psi$. The conditional expectation that comes from this observation is $\epsilon_t(X)=\mathbb{E}_t[X|\mathcal{C}_t]$.
We shall conclude this section with a final definition and  by also noting that
the object-apparatus interaction may be described in a far more general way than the conventional Hamiltonian approach. Namely, by quantum stochastic  cocycles. Further, this also means that we may understand the action of an apparatus on a quantum system as noise.

 \begin{Definition}\label{def6}
 A quantum filter is a conditional expectation $\epsilon_t:\mathcal{A}_t\rightarrow\mathcal{C}_t$ where $\mathcal{C}_t\subset\mathcal{B}(\mathcal{F})$ is the abelian von Neumann algebra generated by all possible compatible trajectories up to a time $t\in T$  defining an adapted filtration such that $\mathcal{C}_t=I_{[t}\otimes\mathcal{C}^t$ and $\mathcal{C}_s\subset\mathcal{C}_t$ for all $s<t$. The algebra $\mathcal{A}_t$ is, at most, the commutant of $\mathcal{C}_t$ in $\mathcal{B}(\mathfrak{h}\otimes\mathcal{F})$ and $\mathcal{B}(\mathfrak{h})\subset\mathcal{A}_t$.
 \end{Definition}
  Quantum Filtering was developed by Belavkin in the 1980's  and its importance as the solution to the quantum measurement problem appears to be largely unknown in the physics community.
In the next section it shall be shown how to construct a quantum filter for the Schr\"odinger cat problem.

\section{Schr\"odinger's Cat is a Quantum Filter}\label{4}
Here we consider the usual setup: a living, observable, cat is coupled to a two-level quantum system called \emph{atom} in such a way that if this atom decays then the cat dies. It should be clear that the possible observations $\Omega$ we can make, of the cat, are represented as a compatible system  of orthoprojectors $\{P_k:k\in\Omega\}$ with which the atom may be conditioned. The way in which the atom and the cat are coupled is given by choosing, for each type of cat observation, a corresponding operation on the atom.

Let $\mathfrak{h}=\mathbb{C}^2$ be the atom Hilbert space consisting of state-vectors $|\psi\rangle=\alpha|g\rangle+\beta|e\rangle$, denoting `ground' and `excited' states. This atom is also considered to evolve according to some Hamiltonian $\hbar H$. Meanwhile, we shall consider the cat Hilbert space $\mathfrak{k}$, the dimension of which is yet to be determined and not necessarily 2. The cat Hilbert space is the representing space for the projections corresponding to the observations of the cat. The first of these is the most important and is the observation of a \emph{dying} cat. We shall call this $P_1$ and the corresponding action on the atom must be the annihilation of the excited state: $J:=|g\rangle\langle e|$. The next observation that we could make is one in which the cat is alive, $P_2$. If the cat is alive then presumably nothing has happened to the atom, so we'd assume that the action on the atom is given by the identity operator $1_\mathfrak{h}$. Finally, we consider a projector $P_0$ resulting in  $\mathfrak{k}=\mathbb{C}^3$ and  enabling us to construct the unitary atom-cat interaction operator
\begin{equation}\label{S}
S=\left(
    \begin{array}{ccc}
      \sqrt{p}JJ^{\dag}  & \sqrt{p}J^{\dag} & \sqrt{q}1_\mathfrak{h} \\
      \sqrt{p}J & J^{\dag}J+{q}JJ^\dag& -\sqrt{qp}J \\
      \sqrt{q}1_\mathfrak{h}  & -\sqrt{qp}J^\dag & pJ^\dag J-\sqrt{p}JJ^\dag
    \end{array}
  \right)
\end{equation}
with respect to the standard basis,
where $J^\dag$ is the usual Hermitian adjoint of $J$ induced by the innerproduct on $\mathfrak{h}\times \mathfrak{h} $ and $0\leq p=1-q\leq1$.
The form of $S$ has been defined by the choice of instrument basis, $P_k=|k\rangle\langle k|$, and the important operator is  the isometry $\widetilde{V}:=S|0\rangle$ which defines a quantum channel
\begin{equation}\label{qch}
    \varphi(\varrho) = {V} (\varrho\otimes I){V}^\dag=p |g\rangle\langle g|+q\varrho
\end{equation}
on any atomic density matrix $\varrho$. Here, the notation $\widetilde{V}\in\mathcal{B}(\mathfrak{h})\otimes\mathfrak{k}$ denotes the \emph{partial transpose} (transpose of $\mathfrak{k}$ only) of $V\in\mathcal{B}(\mathfrak{h})\otimes\mathfrak{k}^\intercal$.

As we saw in the previous section, such an operation $S$ corresponds to the atom-cat coupling at any instant of time, $t$. Each instant of the cat is the instrument of our apparatus here, and this is not just considered at a single instant of time but over some continuous interval of time $T\subset\mathbb{R}$. On such a time interval we can consider any number of `spontaneous' observations of the cat, indexed by chains $\vartheta=\{t_n >\cdots>t_2>t_1 \}\subset T$, and such a sequence of observations is represented on the Hilbert space $\mathfrak{k}^{\otimes n}$, where $n=|\vartheta|$. This is an $n$-point evaluation of the continuous tensor product space $\mathcal{F}$, which is also the Guichardet second quantisation $\Gamma(\mathcal{K})$ of $\mathcal{K}=\mathfrak{k}\otimes L^2(T)$.

Thus, in order to observe the quantum system by virtue of its coupling to the observable cat apparatus we have the following complete atom-cat interaction dynamics. The representing Hilbert space is $\mathcal{H}=\mathfrak{h}\otimes\mathcal{F}$, where $\mathcal{F}$ admits the future-past (causal) decomposition $\mathcal{F}_{[t}\otimes\mathcal{F}^t$. The initial atomic state-vector is $|\psi\rangle$ and the apparatus is defined in the coherent state vector $\Phi :=\xi_\nu^\otimes e^{-\frac{1}{2}\int \nu \mathrm{d}t}$, where $\xi_\nu:=|0\rangle\otimes\sqrt\nu\in\mathcal{K}$ and $\nu$ is the frequency of cat observations.  The atom-cat coupling may itself be understood as the conditioning of the object with respect to the apparatus. We shall work in the \emph{interaction picture} where the interaction dynamics is given by the (unitary) quantum stochastic cocycle $U_0^t=S^\odot_t$, taking $\min T=0$, so that $\psi_t=U_0^t(|\psi\rangle\otimes\Phi)$ has the evaluations
\begin{equation}\label{ev}
\psi_t(\vartheta)=\Phi(\vartheta_{[t})\otimes S^\odot(\vartheta^t)(|\psi\rangle\otimes\Phi(\vartheta^t))=\Phi(\vartheta_{[t})\otimes(\sqrt{\nu}\widetilde{V})^\odot(\vartheta^t)|\psi\rangle
\end{equation}
resolving the QSDE
\begin{equation}\label{dyn}
\mathrm{d}\psi_t(\vartheta)=L(t)\odot\psi_t(\vartheta\setminus t)\mathrm{d}n_t(\vartheta),
\end{equation}
where $L(t)=\sqrt{\nu}(t)(S(t)-I)|0\rangle\in\mathcal{B}(\mathfrak{h})\otimes\mathfrak{k}$ (with $S(t)$, not $S$, in contrast to \eqref{34}) and $n_t(\vartheta)=|\vartheta^t|I(\vartheta)$ is the input counting process with  mean $\psi_t^\dag (1_\mathfrak{h}\otimes n_t)\psi_t^{}=\Phi^\dag n_t\Phi=\nu t$ and intensity $\nu$, having increments $\mathrm{d}n_t(\vartheta)=1$ if $t\in\vartheta$ and otherwise zero.

In the interaction picture any predictable atomic property $X\in\mathcal{B}(\mathfrak{h})$ will evolve according to the free Heisenberg dynamics given by the object Hamiltonian $\hbar H$.   Also notice that the cat, or indeed each copy of the cat in a continuum of cats  over $T$, is coupled with the atom up to a time $t$ via the interaction $U_0^t$, so that the state of the whole system is an entangled quantum pure-state. Nonetheless, we'll see that the possible observations of the cat are the trajectories $\Pi_t\in\mathcal{C}_t$; this is consistent with our every-day experience in experiments.

Before we proceed with the observed trajectories and the filtering theory let's first review the marginal dynamics of the object. This is of course given by $\varrho(t)=\texttt{Tr}_\mathcal{F}[\psi_t^{}\psi_t^\dag]$. To simplify this we shall consider $\sqrt{\nu}$ to be constant on $T$  then for $t\in T$ the marginal density has the explicit form
\begin{eqnarray}\nonumber
    \varrho(t)
    =\widetilde{\Phi}\widetilde{U}_0^t (\rho(0)\otimes I^\otimes)\widetilde{U}_0^{t\dag}\widetilde{\Phi}^\dag
    &=&\sum_{n=0}^\infty\int_{\mathcal{T}^t}V(\vartheta_n)\left(\varrho(0)\otimes I^{\otimes n} \right)V^\dag(\vartheta_n) \nu^n e^{-\nu t}\mathrm{d}\vartheta_n,\\
    &=&\sum_{n=0}^\infty V^{\dag\odot n\dag}\left(\varrho(0)\otimes I^{\otimes n} \right)V^{\dag\odot n} \frac{(\nu t)^n}{n!} e^{-\nu t}\label{mar}
\end{eqnarray}
where $\varrho(0)=|\psi\rangle\langle\psi|$, $\int\mathrm{d}\vartheta_n=\idotsint \mathrm{d}t_1\cdots\mathrm{d}t_n|0<t_1<\cdots<t_n<t$ is multiple integration over the simplex of  chains, and $V^\dag(\vartheta_n):=V^\dag(t_1)\odot\cdots\odot V^\dag(t_n)$ which, under the assumption that $H=\varepsilon_g|g\rangle\langle g|+\varepsilon_e|e\rangle\langle e|$, can be replaced with $V^{\dag\odot n}$ not depending on $t$. This is because in the interaction picture $V(t)=e^{\mathrm{i}Ht}V(e^{-\mathrm{i}Ht}\otimes 1_\mathfrak{k})=(\sqrt{p}JJ^\dag,\sqrt{p}J(t),\sqrt{q}1_\mathfrak{h})$  can be factorised  as $V(1_\mathfrak{h}\otimes(P_0+u(t)P_1+P_2))$, where $u(t)=e^{-i\varepsilon t}$ with $\varepsilon=\varepsilon_e-\varepsilon_g$, so then we see $V(t)(\varrho(t)\otimes I)V(t)^\dag=V(\varrho(t)\otimes I)V^\dag$, with $V$ not depending on $t$.
In sight of \eqref{qch} we can further simplify \eqref{mar} to
\begin{eqnarray}\nonumber
    \varrho(t)=\sum_{n=0}^\infty \varphi^n(\varrho(0)) \frac{(\nu t)^n}{n!} e^{-\nu t}&=&
    \sum_{n=0}^\infty \left(\frac{p(1-q^{n})}{1-q}|g\rangle\langle g|+q^n \varrho(0)\right)\frac{(\nu t)^n}{n!} e^{-\nu t}\\
    &=&|\psi\rangle\langle\psi|e^{-p\nu t}+|g\rangle\langle g|(1-e^{-p\nu t}).   \label{mar2}
\end{eqnarray}
\begin{Remark}
So we see that the expected behaviour of the atom, the 2-level quantum system, is that it \emph{continually} decays from its initial state into its ground state simply as a result of being coupled to the cat.
\end{Remark}
Moreover, notice that this dynamics may also be given by the Lindblad equation
\begin{equation}
    \partial_t\varrho(t)=\widetilde{L}(\varrho(t)\otimes I)\widetilde{L}^\dag-\tfrac{1}{2}\{L^\dag L,\varrho(t)\}
\end{equation}
where  ${L}=\sqrt{\nu}({S}-I)|0\rangle=\sqrt{\nu}(\widetilde{V}-|0\rangle)$. Following the discussion at the end of Section \ref{3} it is notable that from the atom's `point of view', the cat (which serves as the apparatus) is a field of quantum noise. That is, a system under observation appears to be the same as a system in a field of noise.

\subsection{Observation and Quantum Filtering} Now that the specific mathematical framework has been set up to describe the coupling of an apparatus, the cat, to a quantum object, the atom, we are now in a position to investigate the form of an actual observation, a trajectory. That is an ordered string of data produced in any experiment that tells us something about the object under indirect observation by virtue of its coupling to the apparatus. In contrast to some beliefs about observing atoms, these observations are recognised as a property of the apparatus \emph{not} atomic properties, but we can \emph{infer} information about the atom by virtue of its coupling (interaction) with the apparatus.

One of the purposes of this article is to demonstrate that a quantum property $X\neg\mathcal{B}(\mathfrak{h})$ is not actually observable. Instead it is only $\Omega$-observable, which means it is a \emph{predictable quantum property} which can  be estimated  within the context of an apparatus using the conditional expectation, which is the best estimate (in a least-squares sense) given such an apparatus \cite{LUC}. In Definition \ref{def6} we saw that the apparatus is described by the adapted abelian von Neumann algebra $\mathcal{C}_t$, which is also a filtration and represented on $\mathcal{F}$.  The conditional expectation
  of $X$ with respect to $\mathcal{C}_t$ is denoted $\epsilon_t(X)\equiv\mathbb{E}_t[X|\mathcal{C}_t]$ and it is called the \emph{quantum filter}. As we shall see, the quantum filter is a projection of the quantum system onto the apparatus algebra
  and describes the possible observations that can arise in the context of this apparatus. This projection is given by $E^\dag_t=E_t^\dag E_t=E_t\in\mathcal{A}_t$ (recalling from Section \ref{3} that $\mathcal{A}_t$ is the commutant of $\mathcal{C}_t$ in $\mathcal{B}(\mathcal{H})$) such that $E_t(X\otimes I)E_t=(1_\mathfrak{h}\otimes\epsilon_t(X))E_t$, where $E_t:=\Psi_t^{}\Psi_t^\dag$ is itself defined by the fundamental partial isometry $\Psi_t$ called the \emph{filtering wave-function} \cite{FWF} which may be regarded as the means by which our observations arise, as discussed at the end of Section \ref{1}.

The abelian apparatus algebra $\mathcal{C}_t$  is where the actual observations that arise in an experiment live. These observations, and the algebra $\mathcal{C}_t$, are generated by projectors $\Pi_t$ having evaluations
\begin{equation}\label{tra}
    \Pi_t(\vartheta)=I\big(\vartheta_{[t}\big)\otimes P_{k_n}(t_n)\otimes\cdots\otimes P_{k_1}(t_1)
\end{equation}
where $\vartheta^t=\{t_n>\cdots>t_1\}$ and the indices $k_i\in\{0,1,2\}=\Omega$ denote the type of observation made at time $t_i$. Recall that for each type of observation there is, respectively, a corresponding contraction $V_k\in\{\sqrt{p}|g\rangle\langle g|, \sqrt{p}J, \sqrt{q}1_\mathfrak{h}\}$ on $\mathfrak{h}$, so we see that observations of type-1 cannot follow observations of type-0 or type-1; projectors $\Pi_t$ corresponding to such impossible observations are orthogonal to the total system  wave-function: $(1_\mathfrak{h}\otimes\Pi_t)\psi_t=0$. It is therefore convenient to identify a projector $P_t$ such that the subalgebra $P_t\mathcal{C}_t\subset\mathcal{C}_t$ is the algebra of all possible observations. Indeed, $(1_\mathfrak{h}\otimes P_t)\psi_t=\psi_t$.

Notice that the projectors \eqref{tra} are trajectories: they represent a time-ordered sequence of points in the space of all possible observations. In fact, an observation over an interval of time is really a generalisation of the notion of a trajectory.

The complete set of orthoprojectors for observation up to a time $t$ are  as follows. The first is  the adapted vacuum projector $\Pi_t^\varnothing(\vartheta)=I(\vartheta_{[t})\otimes O(\vartheta^t)$  corresponding to `no observation', where $O(\vartheta):=1$ if $\vartheta=\varnothing$ and zero otherwise.  Then there are two other types of observations. The first are
   represented by orthoprojectors of the form
\begin{equation}\label{ob0}
  \Pi^0_t(\vartheta)=I(\vartheta_{[t})\otimes P_0^\otimes(\varkappa)\otimes P_2^\otimes(\omega)
\end{equation}
for any partition $\varkappa\sqcup\omega=\vartheta^t$,
and the second are represented by orthoprojectors of the form
\begin{equation}\label{ob1}
    \Pi^1_t(\vartheta)=I(\vartheta_{[t})\otimes  P_0^\otimes(\varkappa)\otimes P^{}_1(t_i)\otimes P_2^\otimes(\omega)
\end{equation}
for a $t_i\in\vartheta^t$ and any partition $\varkappa\sqcup\omega=\vartheta^t\setminus t_i$, where $\Pi_t^k(\varnothing):=0$ for $k=0,1$. Now, there is an analogue of Newton's binomial formula for product functions over the simplex of chains:
\begin{equation}
\sum_{\varkappa\sqcup\omega=\vartheta}P^\otimes(\varkappa)\otimes Q^\otimes(\omega)=(P+Q)^\otimes(\vartheta),
\end{equation}
so if we sum \eqref{ob0} over all partitions  $\varkappa\sqcup\omega=\vartheta^t$ corresponding to observations of type-0 and type-2, e.g. 00202200020222020202..., then we end up with the projector
\begin{equation}
    P_t^0(\vartheta)= I(\vartheta_{[t})\otimes (P_0+P_2)^\otimes(\vartheta^t)
\end{equation}
if $|\vartheta^t|\neq0$, and we've defined $P_t^0(\varnothing)=0$ for convenience (in order to have a distinct projector $\Pi^\varnothing_t$).
We can proceed in a similar manner with \eqref{ob1}, summing over the partitions of $\varkappa\sqcup\omega=\vartheta^t\setminus t_i$, to get
\begin{equation}
     P_t^1(\vartheta)=\sum_{t_i\in\vartheta^t} I(\vartheta_{[t})\otimes (P_0+P_2)^\otimes(\vartheta^t\setminus t_i)\otimes P_1(t_i).
\end{equation}
Then, defining $P_t^\varnothing=\Pi_t^\varnothing$ for convenience, we see that
\begin{equation}
    P_t=P_t^\varnothing+P_t^0+P_t^1
\end{equation}
is a decomposition of of $P_t$ into three fundamental types of observation given by orthogonal projections.

The projections $P_0$, $P_1$, $P_2$ $\in\mathcal{B}(\mathfrak{k})$, from which the trajectories \eqref{ob0} and \eqref{ob1} are built, may be interpreted as follows. If the cat is observed to be dying then this is represented by $P_1$; this can only happen once due to the nilpotent nature of the operator $J$ on $\mathfrak{h}$. Next, in view of \eqref{S} recall the vector of contraction operators $\widetilde{V}=S|0\rangle$. If $q=1-p=1$, then the atom and cat are not even coupled, so $p$ should be interpreted as an atom-cat coupling probability so that $S(|\psi\rangle\otimes|0\rangle)$ may be thought of as a superposition of `cat coupled to atom' with `cat not coupled to atom'. In that case, an observation of the form $P_2$ can be interpreted as an observation of the cat that infers nothing about the atom. Finally, we have observations of type-0. These are strange in so far as if the cat has died, then $P_0$ represents an observation of a dead cat that leads to the inference of an atomic ground state. On the other hand, if the cat has not died then $P_0$ represents an observation of a living cat that leads to the inference that the atom was always in its ground state and so it will not decay.

To see the mathematical realisation of these interpretations we need only evaluate the expectations of the projectors $P_t^0$ and $P_t^1$; which correspond to the probabilities of the events that they represent. Recall that the initial atomic state was assumed to be $|\psi\rangle=\alpha|g\rangle+\beta|e\rangle$, and that $P_t^k(\varnothing):=0$, $k=0,1$, then we find
\begin{equation}
    \mathbb{E}_t[P_t^0]=\psi_t^\dag(1_\mathfrak{h}\otimes P_t^0 )\psi_t=|\alpha|^2(1-e^{-\nu t})\quad\text{and}\quad
    \mathbb{E}_t[P_t^1]=|\beta|^2(1-e^{-\nu t})
\end{equation}
with $\mathbb{E}_t[P_t^\varnothing]=e^{-\nu t}$ showing that $\mathbb{E}_t[P_t]=1$. Indeed, if $\alpha=0$ the probability that the cat will die will initially be zero but eventually approach 1;  more generally, the eventual probability of death will be the probability $|\beta|^2$ of atomic excitation; of course we are assuming that the cat will live forever if the decaying atom does not end its life. On the other hand, if $\alpha=1$ then the probability of (an eternal) cat remaining alive will approach 1, or more generally tend to $|\alpha|^2$.

Notice that even though we have begun to construct a complete compatible family of observations, the state of the total system is a quantum pure-state and given by the wave-function $\psi_t\in\mathcal{H}$ and it is a second-quantised superposition of cat states entangled with the atom.  However, recall that the  quantum filter is the conditional expectation $\epsilon_t:\mathcal{A}_t\rightarrow\mathcal{C}_t$. Now we shall see how the quantum filter may be given explicitly by the {filtering wave-function} $\Psi_t$ as
\begin{equation}
    \epsilon_t(X)=\Psi_t^\dag (X\otimes I)\Psi_t,
\end{equation}
where $\Psi_t$ lives in the Hilbert $\mathcal{C}_t$-module $\mathfrak{h}\otimes\mathcal{C}_t$. The filtering wave-function can be derived from a quantum Girsanov transformation
which ultimately gives us the output probability distribution $\hat\sigma$ for all of the different observations that could be made.

To this end we shall denote the input apparatus density by $\check\sigma=\Phi\Phi^\dag$ and obtain the following output apparatus density $\hat\sigma_t=\Xi^{}_t \Xi_t^{\dag}$ as
\begin{equation}
\psi_t^{}\psi_t^\dag=U_0^t(|\psi\rangle\langle\psi|\otimes\check\sigma)U_0^{t\dag}=G_t(|\psi\rangle\langle \psi|\otimes\hat\sigma_t)G_t^\dag=\Psi_t^{}\hat\sigma_t\Psi_t^\dag.
\end{equation}
This is the Girasnov transformation and basically involves changing $U_0^t\in\mathcal{B}(\mathcal{H})$ into $G_t\in\mathcal{A}_t$, where
$U_0^t(|\psi\rangle\otimes\Phi)=G_t(|\psi\rangle \otimes\Xi_t)$. The filtering wave-function is defined as $\Psi_t:=G_t(|\psi\rangle \otimes I)$ and is determined from the requirement that $\Psi_t^\dag\Psi_t^{}=P_t$. This means that $\Psi_t=\int|\vartheta\rangle\Psi_t(\vartheta)\langle\vartheta|\mathrm{d}\vartheta$ has the form
\begin{equation}\label{fwfwf}
\Psi_t(\vartheta)=\sum _{}c_{k_1\ldots k_n}^{-1}V_{k_n}\cdots V_{k_1}|\psi\rangle\otimes I(\vartheta_{[t})\otimes P_{k_n}(t_n)\otimes\cdots\otimes P_{k_1}(t_1)
\end{equation}
where the sum is take over all trajectories for which the normalisation factors $c_{k_1\ldots k_n}:=\|V_{k_n}\cdots V_{k_1}|\psi\rangle\|\neq 0$. In fact, $\Psi_t$ is a partial isometry, and the projector $E_t=\Psi_t^{}\Psi_t^\dag\in\mathcal{A}_t$ determines the filter as $E_t(X\otimes I)E_t=E_t\epsilon_t(X)$ and note that $E_t\Psi_t=\Psi_t$ too. Moreover, notice that the normalisation factors, and thus $\Psi_t$, depend on $|\psi\rangle$ (and also $\langle\psi|$ and this dependence can be non-linear), so it would be more appropriate to write $G_t(\psi)$, $\Xi_t(\psi)$ and so on, but for simplicity this notation has been dropped.

The output apparatus wave-function may be decomposed as $\Xi_t=\Phi_{[t}\otimes \Xi^t$, and is determined from the normalisation factors $c_{k_1\ldots k_n}$ as
\begin{equation}
\langle\vartheta|\Xi_t:=\Phi(\vartheta_{[t})\otimes e^{-\frac{1}{2}\nu t}\sum_{(k_1,\ldots,k_n)\in\Omega^n} c_{k_1\ldots k_n}\sqrt{\nu}|k_n\rangle\otimes\cdots\otimes \sqrt{\nu}|k_1\rangle,
\end{equation}
where $\vartheta^t=\{t_n>\cdots>t_1\}$ and  the $|\vartheta^t|=0$ case is just $\Phi(\vartheta_{[t})e^{-\frac{1}{2}\nu t}$, giving $\hat\sigma=\Xi^{}_t\Xi^{\dag}_t$. But note that, as a state on  $\mathcal{C}_t$, the output density $\hat\sigma$ is indistinguishable from the classical output distribution $\hat\varsigma _t=\check\sigma_{[t}\otimes \hat\varsigma^t$ where
\begin{equation}
    \hat\varsigma^t(\vartheta^t)=e^{-\nu t}\sum_{(k_1,\ldots,k_n)\in\Omega^n} |c_{k_1\ldots k_n}|^2 \nu P_{k_n}\otimes\cdots\otimes\nu P_{k_1}.
\end{equation}
This means that the expectation of any operator $A$ in $\mathcal{A}_t$ has the form
\begin{equation}
    \mathbb{E}_t(A):=\psi_t ^\dag A\psi^{}_t=\texttt{Tr}[\hat\varsigma_t \epsilon_t(A)].
\end{equation}

In order to understand this conditional expectation more intuitively let's reconsider it in the following way. Recall that each actual observation $\Pi_t(\vartheta)$, corresponding to a chain of individual observations at times $t_i\in\vartheta^t$, is a product of specific projections $P_{k_i}(t_i)$. And in turn this corresponds to a product of contractive operations $V_{k_i}$ on the quantum system. Thus for any such observation there is a corresponding (unnormalised) expectation of properties $X\in\mathcal{B}(\mathfrak{h})\subset\mathcal{A}_t$ given as
\begin{equation}\label{19}
\langle\psi|V_{k_n}^\dag\cdots V_{k_1}^\dag X V_{k_n}\cdots V_{k_1}|\psi\rangle = \psi_t^\dag (\vartheta)(X\otimes \Pi_t(\vartheta))\psi_t^{}(\vartheta)
\end{equation}
then we see that the conditional expectation is defined as
\begin{equation}
    \epsilon_t(X,\vartheta)=
    \sum^{i=\varnothing,0,1}_{\text{partitions of }\vartheta^t} \left(\frac{\psi_t^\dag (\vartheta)(X\otimes \Pi^i_t(\vartheta))\psi_t^{}(\vartheta)}{\psi_t^\dag (\vartheta)(1_\mathfrak{h}\otimes \Pi^i_t(\vartheta))\psi_t^{}(\vartheta)}\right)\Pi^i_t(\vartheta)
\end{equation}
and to see that this is the same as $\Psi_t^\dag(\vartheta)(X\otimes I)\Psi_t^{}(\vartheta)$ one need only consider the projections $\Pi^i_t(\vartheta)\Psi_t^\dag(\vartheta)(X\otimes I)\Psi_t(\vartheta)$ which equal $\Psi_t^\dag(\vartheta)(X\otimes \Pi^i_t(\vartheta))\Psi_t(\vartheta)$ and, in view of \eqref{fwfwf} and \eqref{19}, it is simply a matter of definition that
\begin{equation}
    \Psi_t^\dag(\vartheta)(X\otimes \Pi^i_t(\vartheta))\Psi_t(\vartheta)
    =\left(\frac{\psi_t^\dag (\vartheta)(X\otimes \Pi^i_t(\vartheta))\psi_t^{}(\vartheta)}{\psi_t^\dag (\vartheta)(1_\mathfrak{h}\otimes \Pi^i_t(\vartheta))\psi_t^{}(\vartheta)}\right)\Pi_t^i(\vartheta).
\end{equation}

With the interaction picture in mind we should actually be filtering the free Heisenberg operators $X(t)\in\mathcal{B}(\mathfrak{h})$. The dynamics of the quantum filtering of $\epsilon_t(X(t))$ may be determined from the dynamics of the
{filtering wave-function} $\Psi_t$ which is described by the  \emph{Belavkin equation} (if we worked in the Schr\"odinger picture this would reduce to the Schr\"odinger equation in the absence of observation) which, in the interaction picture for our Poisson-type cat observations, has the  form
\begin{equation}\label{fwf}
\mathrm{d}\Psi_t=L_k(\Psi_t)\Psi_t\mathrm{d}n^k_t,
\end{equation}
where $L_k(\Psi_t):=V_k(t) /|V_k(t)\Psi_t|-I$ with $|V_k\Psi_t|^2:=\epsilon_t(V_k^\dag V_k^{})$, and
$n^k_t$, $k=0,1,2$, are output counting processes (number operators) for this observation having evaluations
\begin{equation}
    n_t^k(\vartheta)=\sum_{t_i\in\vartheta^t} P_k(t_i)\otimes I(\vartheta\setminus t_i)
\end{equation}
and  intensities $\nu_k(t)=\nu\texttt{Tr}[\hat\varsigma_t\epsilon_t(V_k^\dag V_k^{})]=\nu\|(V_k\otimes I)\psi_t\|^2$ which may be calculated as $\nu_0=p\nu(1-|\beta|^2e^{-p\nu t})$, $\nu_1=p\nu|\beta|^2 e^{-p\nu t}$ and $\nu_2=q\nu$.
Then any freely evolving atomic property $X(t)$ may be projected onto the observer's algebra giving rise to the set of observable trajectories with each one carrying its own estimate of $X(t)$, and that is what the quantum filter $\epsilon_t(X(t))$ is. And its evolution is given by the \emph{classical} stochastic differential equation, represented on $\mathcal{F}$, as
\begin{equation}\label{feq}
\mathrm{d}\epsilon_t(X(t))=\epsilon_t(\mathrm{i}[H,X(t)])\mathrm{d}t+\kappa_{k,t}(X(t))\mathrm{d}n^k_t
\end{equation}
where $\kappa_{k,t}(X)=\epsilon_t(V_k^\dag X V_k^{})/\epsilon_t(V_k^\dag V_k^{})-\epsilon_t(X)$, and $k$ is summed over in \eqref{feq}.

From the explicit form of the conditional expectation for this model it becomes evident that $\Pi_t(\vartheta)\epsilon_t(X,\vartheta)=\langle g|X |g\rangle \Pi_t(\vartheta)$ for any trajectory $\Pi_t(\vartheta)$ if $\vartheta^t\neq \varnothing$, and if $\vartheta^t=\varnothing$ then $\epsilon_t(X,\vartheta)=\langle\psi|X|\psi\rangle \Pi^\varnothing_t(\vartheta)$. Consequently we see that
\begin{equation}
    \epsilon_t(X(t))=\langle\psi|X(t)|\psi\rangle P_t^\varnothing +\langle g|X(t) |g\rangle (P_t^0+P_t^1)
\end{equation}
from which we deduce that $\Psi_t=|\psi\rangle\otimes P_t^\varnothing +|g\rangle \otimes(P_t^0+P_t^1)$ in this  model. Further, the coefficients $\kappa_{k,t}(X(t))$ in \eqref{feq} may be calculated as
 \begin{equation}
  \kappa_{k,t}(X,\vartheta)= \big(\langle g|X| g\rangle-\langle\psi|X|\psi\rangle\big)I(\vartheta), \quad k=0,1
  \end{equation}
if $t=\min\{\vartheta\}$, otherwise
    \begin{equation}
\kappa_{0,t}(X,\vartheta)=0,\quad\textrm{and}\quad  \kappa_{1,t}(X,\vartheta)=-\epsilon_t(X,\vartheta),
  \end{equation}
with $\kappa_{2,t}(X,\vartheta)=0$ for all $\vartheta$.

Although the projectors $\Pi^i_t(\vartheta)$ represent the physical trajectories, i.e. a time-ordered sequence of observations determined by interaction with an atom, there are other observables of importance. Specifically, these are the number operators $n_t^k$, $k=0,1,2$, encountered above, counting the specific types of observation. In particular, the counting of the number of deaths of the cat is worth presenting. It corresponds to $n_t^1$ and its conditional expectation and expectation are
\begin{equation}
\epsilon_t(n_t^1)=P^{}_tn_t^1=P_t^1\quad\text{and}\quad \mathbb{E}_t[n_t^1]=|\beta|^2(1-e^{-\nu t})\leq 1,
\end{equation}
where the latter follows from the LTP property $\mathbb{E}_t[\epsilon_t(A)]=\mathbb{E}_t[A]$.

The fundamental role of these number operators in this this theory is that they are the random noises that are driving the dynamics \eqref{fwf} of the filtering wave-function $\Psi_t$. To make this clear: observation is a field of noise that drives the Lindblad dynamics of an open quantum system. This is non-trivial and in contrast to Zeno's paradox \cite{zeno}, which neglects the necessity of (quantum) stochastic calculus in modelling observation. Even in Lindblad's construction of quantum dynamics \cite{L} the Stinespring dilation of such was unknown until  the rigorous construction of quantum stochastic calculus in \cite{HP}.

\section{Concluding Remarks}
So, the complete quantum theory of a quantum object represented in a Hilbert space $\mathfrak{h}$ is given by a unitarily evolving pure-state $\mathbb{E}_t$ on a von Neumann algebra $\mathcal{A}_t$, represented on
a continuous tensor product Hilbert space $\mathcal{H}=\mathfrak{h}\otimes\mathcal{F}$, and defined as the commutant of the abelian von Neumann algebra $\mathcal{C}_t\subset\mathcal{B}(\mathcal{F})$ which forms a filtration $(\mathcal{C}_t)_{t\geq0}$.
The evolving pure-state is given by $\psi_t=U_0^t\psi$, and such evolution describes the interaction between the quantum object and an apparatus and resolves a quantum stochastic differential equation, generalising the usual Schr\"odinger equation.

The continuous tensor product structure is required in order to be able to continually condition, and thus predict, properties of the quantum object. The algebra  $\mathcal{A}_t$ is the algebra of all $\mathbb{E}_{t}|\mathcal{C}_{s}$-predictable quantum properties, $t\geq s$, and the quantum filtering $\epsilon_t(X)$ of $X\neg\mathcal{A}_t$ is evaluated as sums of compatible and orthogonal projections, each weighted with a specific expectation of $X$, resembling a sort-of estimated spectral decomposition of $X$ in $\mathcal{C}_t$.  These projections are  the trajectories of the apparatus, the observations, resulting from its coupling to the object.

The algebra $\mathcal{C}_t$ may be understood as the observer's history, or memory, and any quantum property affiliated with $\mathcal{B}(\mathfrak{h})$ forms no part of this history. Instead it may be interpreted as the observer's future, or perhaps the observer's present, either way it is distinct from the observer's past. Moreover, the filtering wave-function $\Psi_t$ serves as an interface between this non-observable quantum future and the observed classical past. In fact, $\Psi_t$ is the means by which quantum information is extracted and turned into trajectories. So perhaps it is $\Psi_t$ that should be interpreted as the actual act of observing, and then the trajectories are the result of this observation. In this way observation itself may be regarded as the creator of history.

 Let's now suppose  that the apparatus \emph{is} the observer, `you' are observing via your interaction with part of the universe. In a similar spirit to von Neumann  we shall decompose the world into three parts, but here these are:
 \begin{enumerate}[i)]
     \item The ever-present act of observing, or the experience of existing, if you will.
     \item The duration of this experience, our memory.
     \item Something that we believe to be here, giving rise to any experience we have, giving rise to our existence, and what we might identify with objective reality.
 \end{enumerate}
 This  third part is a result of our inference, it is something that we have deduced as a means to predict our future experiences. It also appears to be quantum mechanical, wave-like, in its structure, and inherently uncertain.  The second part, our memory, appears to be classical, particle-like trajectories, e.g. data being churned out in a laboratory, compatible projections of a quantum universe. As for the first part, the act of observing, this is  essentially the manner with which we interact with whatever it is that we are interacting with. Notice that this is quite different to the orthodox notion that objective reality corresponds to the classical  world.

 The basic quantum filtering model presented here would identify the inferred `objective reality' as  the quantum object $\mathcal{B}(\mathfrak{h})$ which is attributed to the source of our observation. Whilst the actual act of observation can ultimately be identified with the differential interaction operators $[L^\mu_\kappa(t)]=\boldsymbol{L}(t)$ generating the interaction $U_0^t$, of which the Hamiltonian is part. With this philosophy running in the background, what can be said about wave-function collapse?
 Well, all of the information about collapsing the wave-function $|\psi\rangle\in\mathfrak{h}$ is contained within the wave-function $\psi_t\in\mathcal{H}$ which is a superposition of all the different possible collapses, but it is by no means collapsed as it is just an isometry of $|\psi\rangle$. Moreover, all of the possible re-normalised collapsed  wave-functions are contained within the filtering wave-function $\Psi_t$. In particular, this is all encoded within the the quantum filter $\epsilon_t(X)=\Psi_t^\dag X\Psi_t^{}$, which is an \emph{observable} on $\mathcal{F}$. And $\Pi_t\epsilon_t(X)$ is the conditional expectation of $X$ for a given history (or memory) of observations $\Pi_t$.


One of the difficulties with building a model of the world in this way is that the interaction operators $L_\kappa^\mu$ have to be intuitively guessed. For example, in the interaction picture for the Schr\"odinger cat model we had only $L_k^i=S^i_k-\delta^i_k$, and these operators were constructed based on beliefs about what the quantum system under observation is and how it should behave in response to the different observations that we could make.
The usual approach to infer quantum structure comes from precision experiments. In the Stern-Gerlach experiment, for example, particles are beamed through a magnetic field and spin-structure can be inferred to explain the observations. However, again, this requires that the particles and the beam have a specific mathematical structure that we impose to explain the observation, and the manner in which the field and the beam interact is determined by our intuition (belief) about what we think should be happening in order to give rise to what we observe.
Meanwhile, spin-structure can also be derived theoretically, as done by Dirac, in order to form a more unified theoretical framework with the intention of having an inferred structure that can explain a bigger class of observations. Yet still the story is the same. We make observations and we form mathematically structured beliefs about why we made such observations.


As for objective reality: if it is understood that in experiments we are actually observing the  `apparatus' part of a composite system but nonetheless explaining our observation by virtue of the existence of a hidden quantum system represented by $\mathfrak{h}$, then, such a quantum system is inferred, but is it objectively real? Well, this isn't much of a question, because  we can \emph{define} objective reality as a subjective reality $\mathfrak{h}$ that is required in order to explain the  observations that we have i.e. something that (logically) \emph{must} be there in order to explain the observations we have. Then, with only our observations to lay claim to the reality of anything, we are none the wiser. But it would be nice to show that there is some kind of permanent structure to this underlying quantum world, something general and common to all of our inference resulting from our observations. In some sense this is what the Standard Model attempts to do, but such `fundamental particles' are still very specific to the kind of apparatus used, aren't they? Finding an underlying source that explains all observations of any kind (using any apparatus) is much harder, but if such a thing could be done then we could be so bold as to lay claim to an objective reality that is no more or less than the source of all our  observations. It may be that the only thing in common with all observations is the causal structure of the observation process itself, what we understand to be our experience of time. In this way quantum filtering may also me understood as a description of the emergence of time, as experienced by an observer. That is coming from two basic $\{\text{future},\text{past}\}$ degrees of freedom, or better still might be the terminology $\{\text{potential},\text{actual}\}$.

It is worth remarking upon that if we consider an abundance of observations of a particular kind then the `universal' quantum source presumably contains an abundance of individual quantum sources giving rise to such aforementioned observations. Again, this is the doctrine of the Standard Model and its fundamental `particles'. However, note that the concept of particles in Belavkin's filtering theory is something attributed to the apparatus, as this is what forms trajectories. Whereas such particles' quantum counterpart is regarded as wave-like, not particle-like. For example, the quantum electron is a wave, e.g. undergoing diffraction, but the observations of such an electron are particle-like, e.g. detections in a diffraction experiment or trajectories in a cloud chamber.

Finally, it might also seem reasonable that $\mathfrak{h}$ should not be different from $\mathcal{F}$, so that we can consider a more mutual structure to the world in which  two systems can observe one another. Then $\mathcal{H}=\mathcal{F}_1\otimes\mathcal{F}_2$, where $\mathfrak{h}=\mathcal{F}_1$ and $\mathcal{F}=\mathcal{F}_2$.
Note that the observation dynamics described throughout is referred to a time $t$ that is attributed to the observer $\mathcal{F}$, since all notions of past and future are given with respect to that observer. However, this need not be the time as experienced by the object $\mathfrak{h}$. So if the object is now also considered as an observing system then it will generally have it's own concept of time with respect to which it observes the other system.
Notice that system 1 can predict quantum properties in $\mathcal{B}(\mathcal{F}_2)$ and system 2 can predict  quantum properties in $\mathcal{B}(\mathcal{F}_1)$.
This structure was originally considered by Hudson, see \cite{DP} for example, and studied in the framework of Belavkin's QSC Formalism in \cite{MFB}, but the construction of a `universal' flow of time arising from two such mutually observing systems is yet to be found.

\end{document}